\DeclareMathOperator*{\minimize}{minimize}
\newtheorem{theorem}{Theorem}
\newtheorem{corollary}{Corollary}
\newtheorem{definition}{Definition}
\newtheorem{notation}{Notation}
\newenvironment{proof}[1][Proof]{\noindent\textbf{#1.} }{\ \rule{0.5em}{0.5em}}
\DeclareMathOperator{\Ima}{Image}
\begin{document}

\title{Iterative Planning for Multi-agent Systems: An Application in Energy-aware UAV-UGV Cooperative Task Site Assignments}

\author{Neelanga Thelasingha, Agung Julius, James Humann, Jean-Paul Reddinger, James Dotterweich,  Marshal Childers

\thanks{This work was supported by DEVCOM Army Research
Laboratory under the Cooperative Research Agreement W911NF-21-2-0283 79054-CI-ARL}
\thanks{Neelanga Thelasingha and Agung Julius are with the Center for Mobility with Vertical Lift (MOVE),
Rensselaer Polytechnic Institute, 2303 Jonsson Engineering Center, 110 8th Street, Troy, NY 12180, USA. (email: thelan@rpi.edu, agung@ecse.rpi.edu)}
\thanks{ James Humann, Jean-Paul Reddinger, James Dotterweich and Marshal Childers are with DEVCOM Army Research Laboratory, Aberdeen Proving Ground, MD
21005, USA. (email: james.d.humann.civ@army.mil, jean-paul.f.reddinger.civ@army.mil, james.m.dotterweich.civ@army.mil, marshal.a.childers.civ@army.mil)}}

\markboth{Journal of \LaTeX\ Class Files,~Vol.~14, No.~8, August~2021}%
{Shell \MakeLowercase{\textit{et al.}}: A Sample Article Using IEEEtran.cls for IEEE Journals}


\maketitle

\begin{abstract} This paper presents an iterative planning framework for multi-agent systems with hybrid state spaces. The framework uses transition systems to mathematically represent planning tasks and employs multiple solvers to iteratively improve the plan until computational resources are exhausted. When integrating different solvers for iterative planning, we establish theoretical guarantees for recursive feasibility. The proposed framework enables continual improvement of solutions to reduce sub-optimality, efficiently using allocated computational resources. The proposed method is validated by applying it to an energy-aware UAV-UGV cooperative task site assignment problem. The results demonstrate continual solution improvement while preserving real-time implementation ability compared to algorithms proposed in the literature. 
\end{abstract}

\def\abstractname{Note to Practitioners}
\begin{abstract}
\color{black}
This paper presents an iterative planning solution for cooperative planning problems in multi-agent systems, which integrates multiple solvers to create an optimization framework. The proposed planning framework has been theoretically validated and applied in an energy-aware cooperative planning scenario for multi-vehicle task site assignments. The proposed framework can be applied to plan for any generalized task site assignment using multiple solvers iteratively. 
\color{black}

\end{abstract}

\begin{IEEEkeywords} Iterative planning, Transition systems, Multi-agent planning, Energy-aware routing
\end{IEEEkeywords}
\def\abstractname{Submission Type}
\begin{abstract}
    Regular paper
\end{abstract}

\section{Introduction}
\IEEEPARstart{R}{ecent} advances in robotics and control have enabled multi-agent systems that operate in complex real-world environments. Operating such systems often involves solving complex planning problems in multi-dimensional state spaces. Real-world implementations of such solutions have wide applications, such as in Unmanned Aerial/Ground Vehicle (UAV/UGV) systems \cite{1_8671711,14_carlsson_song_2018}, underwater vehicles\cite{Yordanova2017RendezvousPF}, route planning for a multi-robot station in a manufacturing cell
\cite{3_9842330}, station inventory management systems \cite{4_8792385} and automated material handling systems \cite{5_8754729}. Many of these applications involve planning trajectories in high-dimensional state spaces. Among them, route planning tasks for UAV-UGV teams \cite{6_8798870,2_9476848,8_ramasamy_reddinger_dotterweich_childers_bhounsule_2022,10_ostertag_atanasov_rosing_2022,Redding2011DistributedMP} involve specific challenges due to the heterogeneous dynamics. Further, agents' operations are limited by several constraints, including energy usage \cite{Tatsch2020RoutePF,Mitchell2015MultiRobotPC} and the assigned task, for example, package delivery \cite{Mathew2015PlanningPF}.

\subsection{Motivation}
In general, most cooperative planning tasks can be formulated as a variant of the well-known NP-hard vehicle routing problem \cite{13_9926704,Yu2018AlgorithmsAE} \color{black} and its two-echelon vehicle routing problem counterpart \cite{https://doi.org/10.1111/itor.13052, ZHOU2023124}, but the complexity increases when energy dynamics must also be considered \cite{2_9476848,Maini2019CooperativeAV}.  \color{black} The planned routes must be feasible under the energy availability constraint for each vehicle.  When energy recharging is considered, additional constraints are imposed \cite{Kenzin2020CoordinatedRO}. In any solution, each agent must operate within the bounds of the energy capacity and follow the defined discharge/recharge dynamics\cite{Rappaport2017CoordinatedRO,Xing2021BatteryCD}. Therefore, including the energy dimension in energy-aware planning tasks further increases the complexity \cite{5_8754729,8_ramasamy_reddinger_dotterweich_childers_bhounsule_2022}. The state spaces in such problems are hybrid, with multiple continuous and discrete dimensions. When the applications include persistent or periodic delivery tasks that require receding horizon planning, such as the cases in \cite{3_9842330,Redding2011DistributedMP} and \cite{11_6042503}, the allowed computation resources per planning step are bounded.  
\color{black}

In the literature, many approaches include discretizing the state space to speed up computation. 
These formulations enable the application of Traveling Salesperson Problem (TSP) solvers and Vehicle Routing Problem (VRP) solvers in the discretized state space \cite{8_ramasamy_reddinger_dotterweich_childers_bhounsule_2022,Yu2018AlgorithmsAE}
. Further, a discrete formulation of the problem allows the energy dynamics and task site assignment to be encoded as integer and logical constraints. Then, the cooperative route planning task can be solved as a Mixed-Integer Program (MIP)\cite{Sundar2015FormulationsAA}.
However, solving the MIP for optimality is computationally expensive. Therefore, hierarchical multi-level optimization approaches have been proposed to reformulate the constraints or break the problem into smaller subproblems and solve them \cite{2_9476848,9_WEN2022103763}.  This still requires solving an NP-hard optimization problem, which scales up very quickly with the number of agents and task sites. 
Another approach is data-driven methods that leverage computation and training time to learn a policy that can be deployed \cite{10122127,dong2022review}. However, these methods suffer from sample inefficiency and lack of safety guarantees \cite{robotics11040081}.

We identify the need for a solver for the task site assignment problem with anytime computational property.
\color{black} This allows a solver to provide feasible but possibly suboptimal solutions at any point during its execution. \color{black} To this end, we propose a unifying iterative planning framework for multi-agent systems. Our framework first finds a feasible solution using a constraint solver and iteratively optimizes the solution, given the availability of computation resources. See the example of its implementation in Figure. \ref{fig:exsample_scn}, where a Satisfiability Modulo Theory (SMT) solver provides a feasible solution, and successive solvers improve upon the previous solutions. The challenge here is that these subsequent solvers operate in varying hybrid state spaces. Therefore, a unifying framework where consensus can be achieved by all solvers is essential. In this paper, we proposed to use the concept of transition systems as the basis for this unifying framework \cite{Belta_Yordanov_Gol,tabuada_book}.

\begin{figure}[h]
\centering
\includegraphics[width=0.5\textwidth]{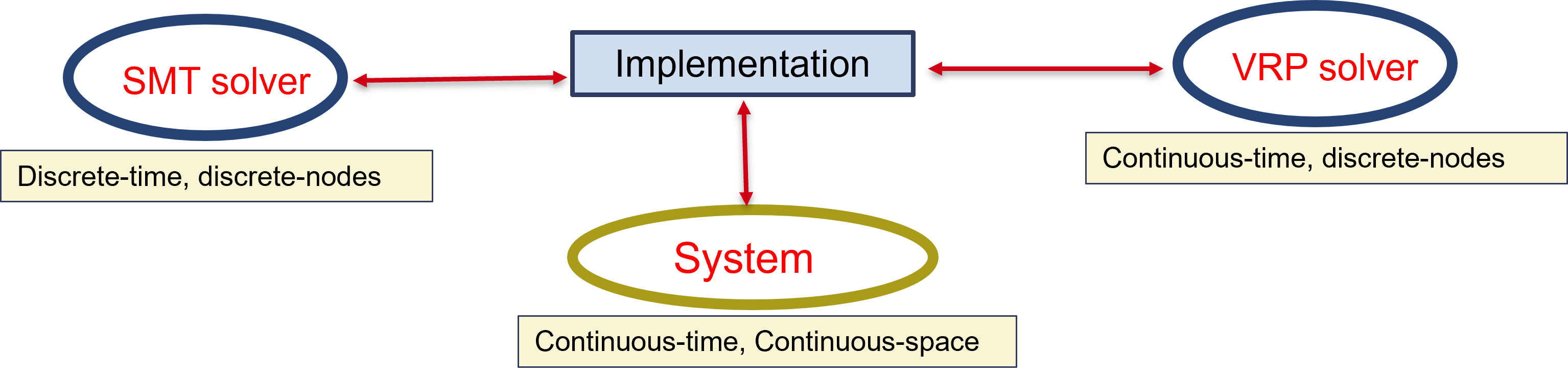}
\caption{\color{black} Example - SMT solver that operates in discrete-time on discrete-nodes and a VRP solver that operates in continuous-time on discrete-nodes generating implementations for a system in continuous-time on continuous-space.}
\label{fig:exsample_scn}
\end{figure}

\color{black}

\subsection{Overview}
Transition systems have been widely used as a theoretical formulation to represent problems in complex systems with state spaces with continuous and discrete dimensions\cite{Belta_Yordanov_Gol}. Standard problem discretization procedures can be applied to formulate sampled transition systems and enable faster solution computation. Our proposed framework uses multiple solvers that are called iteratively to improve the existing solution. An example approach is to call a faster solver initially to generate a feasible plan and then to iteratively call the next solvers to improve the plan until computational resources are exhausted. However, the solvers have different sampling and discretization schemes and therefore require a unifying framework to support integration. Therefore, we derive and prove the necessary mathematical conditions to guarantee the recursive feasibility, i.e., that all solvers agree on each other's solutions and identify them as feasible plans to satisfy the task site assignment.



We apply the proposed framework on an energy-aware UAV-UGV cooperative routing task. In this application, we formulate team-level and agent-level planning problems. The effectiveness of our framework is demonstrated through improving the objective function, i.e., the plan execution time, when compared to standard optimization solvers like mixed integer programming solvers \cite{6_8798870, Maini2019CooperativeAV} used in state-of-the-art solutions. 

\subsection{Contributions}
\color{black}
The main contributions of this paper are listed and described as follows.
\begin{enumerate}
\item Iterative planning framework that integrates multiple solvers for planning in multi-agent systems.
The iterative planning framework enables fast computation of a feasible plan and continual improvement of the plan given computation time. The framework also includes a shrinking horizon implementation that replans at each step of execution for further optimization over time. 

\item Theoretical analysis of recursive feasibility of the iterative planning framework.
Using transition systems, we introduce a mathematical framework to unify different solvers for planning in multi-agent systems with hybrid state spaces for any general task site assignment problems. We derive and prove the conditions to preserve the recursive feasibility of the proposed algorithms. 

\item Application in energy-aware UAV-UGV cooperative task site assignments.
We apply the proposed framework to solve energy-aware UAV-UGV cooperative task site assignments. We cast the problem in the proposed mathematical framework to prove the recursive feasibility for two solvers that are called iteratively to optimize the plan. The shrinking horizon execution enables the continual improvement of the plan by re-planning at each step.

\item Comparative result analysis of proposed approach against optimization solvers and state-of-the-art methods.
We compare the compute time and total plan time of the proposed approach against standard MIP solver and state-of-the-art methods proposed in literature such as Bayesian optimization and evolutionary methods. We also include a sensitivity analysis of the proposed method against the complexity of the problem by varying the number of task sites and the number of agents. 
\end{enumerate}

\section{Related Work}

In the literature, there are solutions for multi-agent planning and routing for specific tasks, such as package delivery and distribution assignments \cite{1_8671711,14_carlsson_song_2018, Zhen2022BranchpriceandcutFT}. However, adapting these solutions to the generalized tasks site assignment, which includes specifications on required visits and some general tasks at the intended sites is challenging \cite{1_8671711,2_9476848,8_ramasamy_reddinger_dotterweich_childers_bhounsule_2022,10_ostertag_atanasov_rosing_2022}. Some literature considers agents that do not interact \cite{1_8671711,9_WEN2022103763}. \color{black} Others consider agent-agent interactions, for example, between ground vehicles and aerial vehicles in package delivery scenarios \cite{https://doi.org/10.1002/net.22019,14_carlsson_song_2018,ZHOU2023124,https://doi.org/10.1111/itor.13052} or as a refueling or recharging resource in surveillance-type tasks \cite{8_ramasamy_reddinger_dotterweich_childers_bhounsule_2022,6_8798870,2_9476848,Redding2011DistributedMP,Mathew2015PlanningPF} where cooperative tasks can be formulated as a two-echelon vehicle routing problem. \color{black} Our paper captures agent-agent interaction in a generalized task site assignment.

Energy-aware planning problems \cite{4_8792385,5_8754729,2_9476848,8_ramasamy_reddinger_dotterweich_childers_bhounsule_2022,13_9926704,Yu2018AlgorithmsAE,Kenzin2020CoordinatedRO,7_9836044} arise when energy consideration is included in planning, adding further complexities by modeling the energy dynamics \cite{Sundar2015FormulationsAA} and optimal planning for the refueling/recharging instances\cite{Xing2021BatteryCD}.   In this paper, we formulate the planning problem for cooperative tasks as an energy-aware planning problem. Usual energy-aware planning problem formulations include the VRP, which is a generalization of the well-known NP-hard traveling salesperson problem \cite{10_ostertag_atanasov_rosing_2022,Tatsch2020RoutePF,Mathew2015PlanningPF,Cone,Marler:2023:0028-1425:115}. Adding more constraints such as health monitoring \cite{Redding2011DistributedMP}, time constraints \cite{Tatsch2020RoutePF,11_6042503, 7_9836044}, and possible adversarial attacks \cite{Cone, Marler:2023:0028-1425:115} further exacerbates the problem complexity. In this work, we first focus on finding feasible solutions and iteratively optimize them later.

Once the planning problem is formulated, the solution method depends on the problem variable representation.  Solving the problem optimally in the naturally continuous state space would yield optimal solutions\cite{Yordanova2017RendezvousPF}, but with high computation complexity. On the other hand, a discretization \cite{6_8798870} of the state space with graphical representations \cite{Mathew2015PlanningPF,Mathew2013AGA} enables the use of VRP solvers and MIP solvers \cite{8_ramasamy_reddinger_dotterweich_childers_bhounsule_2022,2_9476848}. In recent work, optimal solutions have been achieved by solving an MIP \cite{6_8798870, Maini2019CooperativeAV}. However, this is not applicable for large problems \cite{Mathew2013AGA, Sundar2015FormulationsAA,Cone}. Some papers, for example, \cite{6_8798870, Tatsch2020RoutePF, Mitchell2015MultiRobotPC, Mathew2013AGA}, use a heuristic-based approach that is challenging to generalize. There are some other techniques such as adaptive neighborhood search \cite{https://doi.org/10.1002/net.22019} that can improve a solution. However, obtaining the optimal solutions for a large number of heterogeneous interacting agents is not possible.  

Another class of recent work proposes to use a probabilistic formulation of the planning problem \cite{Redding2011DistributedMP,1_8671711}. Solving such problems usually requires a data-driven learning-based approach\cite{Rappaport2017CoordinatedRO,9712866}, which learns an optimal policy \cite{Yordanova2017RendezvousPF}. However, this approach usually requires a lot of samples and extensive training. Further, it is not possible to establish safety or performance guarantees with such approaches, unlike greedy methods \cite{10_ostertag_atanasov_rosing_2022}.

Some published work uses a hierarchical approach, where an initial step of clustering of task sites simplifies the problem for the next step\cite{8_ramasamy_reddinger_dotterweich_childers_bhounsule_2022,9001184,2_9476848,Maini2019CooperativeAV,7_9836044}. In the next step,   \cite{5_8754729,6_8798870,Sundar2015FormulationsAA,Cone} use a MIP solver while \cite{2_9476848,8_ramasamy_reddinger_dotterweich_childers_bhounsule_2022,7_9836044} use Bayesian optimization. However, an optimization problem still needs to be solved to obtain a specification-satisfying solution.

In contrast with the previous approaches, our proposed approach uses an iterative framework, where, at the first step, a constraint solver generates a feasible solution. Then, given the availability of the computation resources, it is further optimized through iterative solvers. Here, it is important to note that at each step of the iterative framework, a feasible solution always exists at hand. This formulation requires a multi-level optimization framework \cite{9001184} for multiple solvers with \color{black}hybrid state spaces.

We use transition systems, commonly used in literature for modeling complex problems with hybrid state spaces \cite{doi:10.1146/annurev-control-091420-084139}, to define the framework and conduct verification. In \cite{Abdulla_2010}, monotonic transition systems are used for symbolic reachability analysis, while \cite{5716670} and \cite{6362199} conduct verification in fuzzy automata. Further, \cite{4982634} and \cite{Girard_Julius_Pappas_2007} present simulation functions for stochastic hybrid automata using transition systems. Their modeling capabilities have been applied for decision problems such as signaling in traffic networks in \cite{10.1145/2728606.2728607}. Further, in \cite{10.5555/3091125.3091228} epistemic transition systems are defined to represent coalitions and distributed knowledge. Therefore, transition systems present a well-defined framework with modeling capabilities to generate verifiable results in hybrid systems.
\color {black}

\section{Preliminaries}

In this section, we introduce and elaborate on the required concepts and definitions used in this paper. 

\subsection{Notation}
\noindent$S$: Infinite state transition system. \\
$X$: Set of states in the transition system $S$. \\
$T$: Set of transitions in the transition system $S$. \\
$L$: Set of labels associated with transitions in $T$. \\
$Y$: Set of task sites in the transition system $S$. \\
$H$: Output map of the transition system $S$. \\
$p^j_{x}$: x-coordinate of  the $j^{th}$ vehicle. \\
$p^j_{y}$: y-coordinate of the  $j^{th}$ vehicle. \\
$e_j$: Energy level of the $j^{th}$ vehicle. \\
$f_j$: Label of the $j^{th}$ vehicle. \\
$x_j$: State of the $j^{th}$ vehicle, with components $[p^j_{x}, p^j_{y}, e_j, f_j]$. \\
$x$: System state. \\
$\gamma_d$: Sampling interval for the time dimension. \\
$\gamma$: Timestamp for the implementation of the trajectory. \\
$X_d$: Sampled state space. \\
$Y_d$: Sampled output space. \\
$T_d$: Set of transitions in the sampled transition system $S_d$. \\
$\tau$: Execution trajectory in the original transition system $S$. \\
$M$: Specification in the original transition system $S$. \\
$M_d$: Discretized specification. \\
$\Xi$: Task site assignment. \\
$\Sigma(S)$: Set of all execution trajectories in $S$. \\
$\Sigma^{N}(S)$: Set of all execution trajectories of length $N$ in $S$. \\
$\zeta_\tau$: Implementation of a trajectory $\tau$. \\
$\xi_\zeta(\gamma)$: Output at timestamp $\gamma$ of the implementation $\zeta$. \\
$N_y$: Cardinality of the assignment for the task site. \\
$\hat{Y_i}$: Output class in the task site assignment. \\
$\hat{X_i}$: Key state class formed by output class $\hat{Y_i}$. \\
$\hat{T_j}$: Key transition class formed by output class $\hat{Y_i}$. \\
$f_S$: Objective function in transition system $S$. \\
$t^{\max}_{\mathrm{compute}}$: Maximum allowed compute time for the solver. \\
$t^{\Omega}_{\mathrm{compute}}$: Compute time spent by the solver $\Omega$.\\
$\Ima(\xi_\zeta)$: Image of the output behavior of $\zeta$. \\
$\tau_d$: Trajectory in the sampled transition system . \\
$\zeta_\Sigma(S)$: Set of all feasible implementations in the system $S$.\\
\color{black}
\subsection{Transition Systems}
\color{black}
Many real-world planning tasks involve complex dynamical systems with finite and infinite state spaces along with their combinations. 
A proper mathematical formulation of these tasks and involved systems is essential when formulating planning strategies. Systems that are modeled as transition systems offer a well-defined framework for describing planning tasks and strategies. These systems provide a formal framework that captures the discrete changes and interactions within a system's state space over time. By representing state transitions and the corresponding actions or events, transition systems offer a concise and abstract representation of the underlying dynamics.

The advantages of using transition systems lie in their ability to handle both finite and infinite state spaces, accommodating a wide range of real-world scenarios. They enable the formulation of planning problems by defining the set of possible states, actions, and transitions between states. This clarity in representation facilitates the development of planning strategies, allowing systematic exploration of various solvers or solving strategies.  However, alternative formalisms such as differential equations or probabilistic models might be preferred in cases where continuous dynamics play a crucial role or uncertainty is inherent.

Throughout this paper, we use the idea of transition systems as the mathematical model that describes a dynamic phenomenon\cite{tabuada_book}. This enables us to model systems that evolve in finite state spaces with discrete transitions as well as systems with infinite and continuous state spaces that are described by differential equations. We formally define a transition system as follows.

\color{black}

\begin{definition}
[Transition System]
A transition system is defined as a quintuple $S=(X,T,L,Y,H),$
where $X$ is the state space (or set of states), $L$ is the set of transition labels, and $T\subset X\times L\times X$ is the set of transitions, where a transition

\[
t=(x,\ell,x^{\prime})
\]
denotes a transition from a state $x\in X$ to the state $x^{\prime}\in X,$
which is labeled by the symbol $\ell \in L$. Here, $\ell$ denotes the \underline{time duration} for the transition $t$ as shown in Figure \ref{fig:trns}. The set of outputs is denoted by $Y$ and $H\colon X \to Y$ is the output map. 

\end{definition}

\begin{figure}[h!]
\centering
\includegraphics[width=0.35\textwidth]{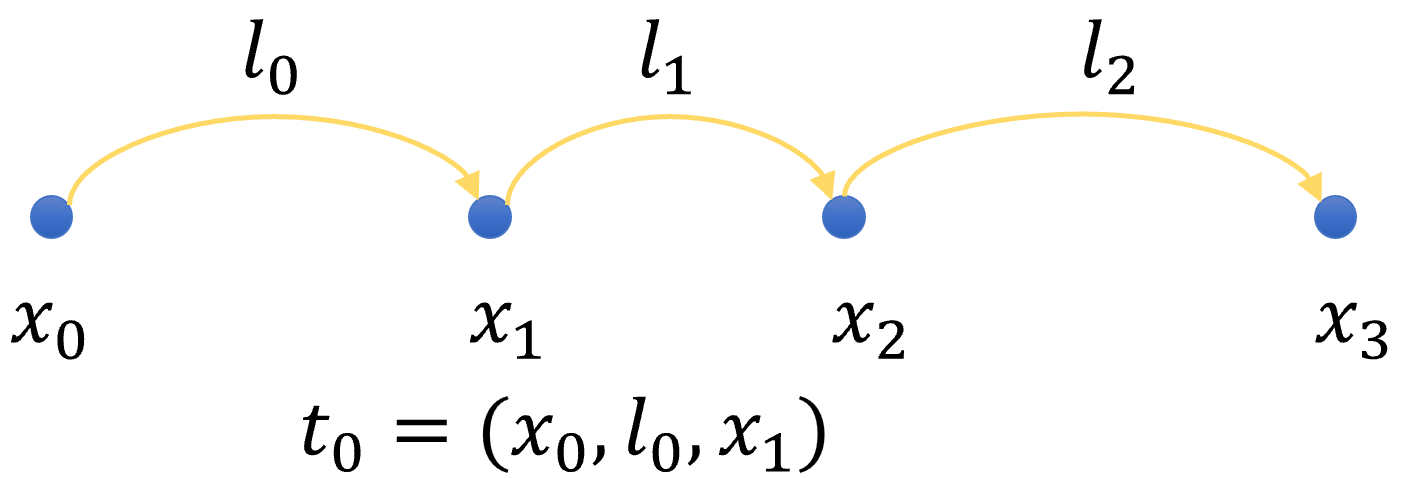}
\caption{\color{black} States, labels and transitions.}
\label{fig:trns}
\end{figure}

\begin{notation}
For a transition system $S=(X,T,L,Y,H)$, the fact that $(x,\ell,x^{\prime})\in T$
is also denoted by $x\overset{\ell}{\rightarrow_{T}}x^{\prime}$ or
$x\overset{\ell}{\rightarrow}x^{\prime}$ if the set of transitions $T$ is
clear from the context.
\end{notation}

\begin{notation}
The set of all strings (i.e., finite sequences) of symbols in $L$ is denoted
by $L^{\ast}$.
\end{notation}

\subsection{Trajectories and Implementations}

We define the evolution of a dynamic phenomenon within a transition system through sequential transitions as an execution trajectory.

\begin{definition}
[Execution Trajectories] For any positive integer $N$, a sequence $\tau
:=x_{0}\ell_{0}x_{1}\ell_{1}\cdots\ell_{N-1}x_{N}$ where $x_{0},\cdots
,x_{N}\in X$ and $\ell_{0},\cdots,\ell_{N-1}\in L$ is said to be an execution
trajectory of $S=(X,T,L,Y,H)$ iff transition $t_k=(x_{k},\ell_{k},x_{k+1})$ exists in $S$, i.e., $t_k \in T$ for all $ k\in\{0,\cdots,N-1\}$. 
The length of the execution trajectory $\tau$ is defined as $N$.
\end{definition}

\begin{notation}
For a transition system $S=(X,T,L,Y,H),$ the set of all execution trajectories of
$S$ is denoted by $\Sigma(S)$. The set of all execution trajectories of length
$N$ of $S$ is denoted by $\Sigma^{N}(S)$.
\end{notation}


\begin{notation}
For a transition system $S=(X,T,L,Y,H),$ any state $\hat{x}\in X$ is said to be in an execution trajectory $\tau
:=x_{0}\ell_{0}x_{1}\ell_{1}\cdots\ell_{N-1}x_{N}$ if $\hat{x}=x_{i}$ for some $i \in \{0,\cdots,N\}$ and it is denoted by $\hat{x}\in \tau$. 
\end{notation}

\begin{notation}
For a transition system $S=(X,T,L,Y,H),$ any transition $t \in T$ is said to be in an execution trajectory $\tau
:=x_{0}\ell_{0}x_{1}\ell_{1}\cdots\ell_{N-1}x_{N}$, if $t=(x_{i},\ell_{i},x_{i+1})$ for some $i \in \{0,\cdots,N-1\}$ and it is denoted by $ t \in \tau$. 
\end{notation}

Next, we define implementations of trajectories. An implementation describes the system state's behavior when a trajectory is executed in real-time, following the transitions. Note that we associate the labels of transitions with the time duration of that transition execution.

\begin{definition}
[Implementation] Let a transition system be $S$ and an execution trajectory be $\tau
=x_{0}\ell_{0}x_{1}\ell_{1}\cdots\ell_{N-1}x_{N}$. Then an implementation of the trajectory $\tau$ is defined as a function of time $\zeta_\tau \colon (0,\sum_{i=0}^{N-1} \ell_i) \to X$ such that, for all $i\in \{0,1,\dotsc,N\}$, $\zeta_\tau(\sum_{k=0}^{i-1} \ell_k)=x_i$. We denote the set of implementations of all trajectories of $S$ as $\zeta_{\Sigma(S)}$.

\end{definition}

\subsection{Monotone Transition Systems}

Here, we introduce an ordering relation $\succeq$ on the states of a transition system. This allows comparison and ordering of the states according to some goodness measure. For example, let the state $x \triangleq  [x_p,x_f]$ where $x_p\in \mathbb{R}^2$ denotes the position of an agent and $x_f \in \mathbb{R}$ represents the fuel level. Let us define the fuel level as the goodness measure. Thus, given a state $x = [x_p,x_f]\in X$, any state $x^\prime = [x_p,x^{\prime}_f]\in X$ is greater or equal in the partial ordering if $x^{\prime}_f \geq x_f$ and it is denoted as $x^\prime \succeq x$.  We
assume that $\succeq$ is \emph{reflexive}, i.e., $x\succeq x$ for any $x\in X.$ Next, we define the monotone properties of transition systems in terms of partial ordering.


 
\begin{definition}
[Monotone Transition Systems] \label{def:monotonesys}Given a transition system  $S=(X,T,L,Y,H)$ and a
partial order relation $\succeq$ on $X$, $S$ is said to be monotone with
respect to $\succeq$ iff the following statement is true for every
$x_{1},x_{1}^{\prime}\in X$ and $\ell\in L$, such that $(x_{1},\ell,x_{1}^{\prime}) \in T$: For all $x_{2}\in X$ such that $x_{2}\succeq
x_{1},$ there exists a transition $(x_{2},\ell,x_{1}^{\prime})\in T$.
\end{definition}

\color{black}
An implication of a transition system  $S=(X,T,L,Y,H),$ being monotone with respect to $\succeq$ is that if a sequence $x_{0}\ell_{0}x_{1}\ell_{1}\cdots\ell
_{N-1}x_{N}$ is known to be an execution trajectory of $S$, then for any
initial state $\hat{x}_{0}$ such that $\hat{x}_{0}\succeq x_{0},$ there is an
execution trajectory $\hat{x}_{0}\ell_{0}\hat{x}_{1}\ell_{1}\cdots\ell
_{N-1}\hat{x}_{N}$ where $\hat{x}_{1}\succeq x_{1},\cdots,\hat{x}_{N}\succeq
x_{N}.$ 

\color{black}
Therefore, in a monotone transition system, for a trajectory $\tau=x_{0}\ell_{0}x_{1}\ell_{1}\cdots\ell
_{N-1}x_{N}$, replacement of a state $x_i\in \tau$ by a state $\hat{x}_i$ such that $\hat{x}_i \succeq x_i$, results in a trajectory $\hat{\tau}$ with $\hat{x}_j \succeq x_j$ for all $j$ such that $j>i$.

\begin{notation}
For a transition system $S=(X,T,L,Y,H),$ that is monotone with respect to  $\succeq$, if $\tau=x_{0}\ell_{0}x_{1}\ell_{1}\cdots\ell
_{N-1}x_{N}$ is known to be a trajectory of $S$, then for any initial state $\hat{x}_{0}\succeq x_{0},$ there is an execution trajectory, $\hat{\tau}=\hat{x}_{0}\ell_{0}\hat{x}_{1}\ell_{1}\cdots\ell
_{N-1}\hat{x}_{N}$ where $\hat{x}_{1}\succeq x_{1},\cdots,\hat{x}_{N}\succeq
x_{N}$, which is denoted by $\hat{\tau}\succeq_{\tau} \tau$. Here, $\succeq_{\tau}$ denotes the partial ordering notation lifted to the set of trajectories.
\end{notation}

\color{black}

\subsection{Output Behavior}

While an execution trajectory exists in the state space of a system, observations can be made in the output space. Similarly, for any implementation that returns the system state at any given time, observations can be made in the output space by passing the returned state through the output map. The corresponding output sequence of an implementation observed through the output map $H$ is defined as the Output Behavior of an implementation.


\begin{definition}
[Output Behavior]For a given execution trajectory $\tau=x_{0}\ell_{0}x_{1}\ell_{1}\cdots\ell_{N-1}x_{N}$ of a transition system $S=(X,T,L,Y,H)$, and its implementation $\zeta$,  a function of time $\xi_\zeta \colon (0,\sum_{i=0}^{N-1} \ell_i) \to Y$ where,
\begin{equation}
\xi_\zeta(\gamma)=H(\zeta(\gamma)),~\forall \gamma \in\big(0,\sum_{i=0}^{N-1} \ell_i\big),
\end{equation}
is defined as the output behavior of an implementation $\zeta$.
\end{definition}

\color{black}
\subsection{Task Site Assignment in a Transition System}

The goal of a task site assignment is to observe a specific set of outputs in the output space of a transition system.

\begin{definition}
[Task Site Assignment] For a transition system $S=(X,T,L,Y,H)$, a task site assignment is a collection of output classes, $\Xi=\{\hat{Y_1},\hat{Y_2},\dotsc,\hat{Y_{N_y}}\}$, where each $\hat{Y_i}\subset Y$. Here $N_y$ is the cardinality of the assignment. If there exists an implementation $\zeta$ and its output behavior $\xi_\zeta$ such that $\hat{Y_i} \cap \Ima(\xi_\zeta) \neq \emptyset $ for all $i=1,2,\dotsc, N_y$, then $\xi_\zeta$ is said to satisfy the task site assignment $\Xi_N$.
\end{definition}

 A planning problem is formulated as finding an execution trajectory in the transition system that produces the given outputs of the task site assignment when implemented.

\subsection{Sampled Transition System}

We propose a solution framework that can be employed to generate feasible yet sub-optimal solutions faster and optimize them given the computation resources and time. To this end, we discretize the infinite state system in focus through a sampling procedure as shown in Figure \ref{fig:sampled}. This involves a redefinition of the system as a sampled transition system. 
\begin{figure}[h!]
\centering
\includegraphics[width=0.4\textwidth]{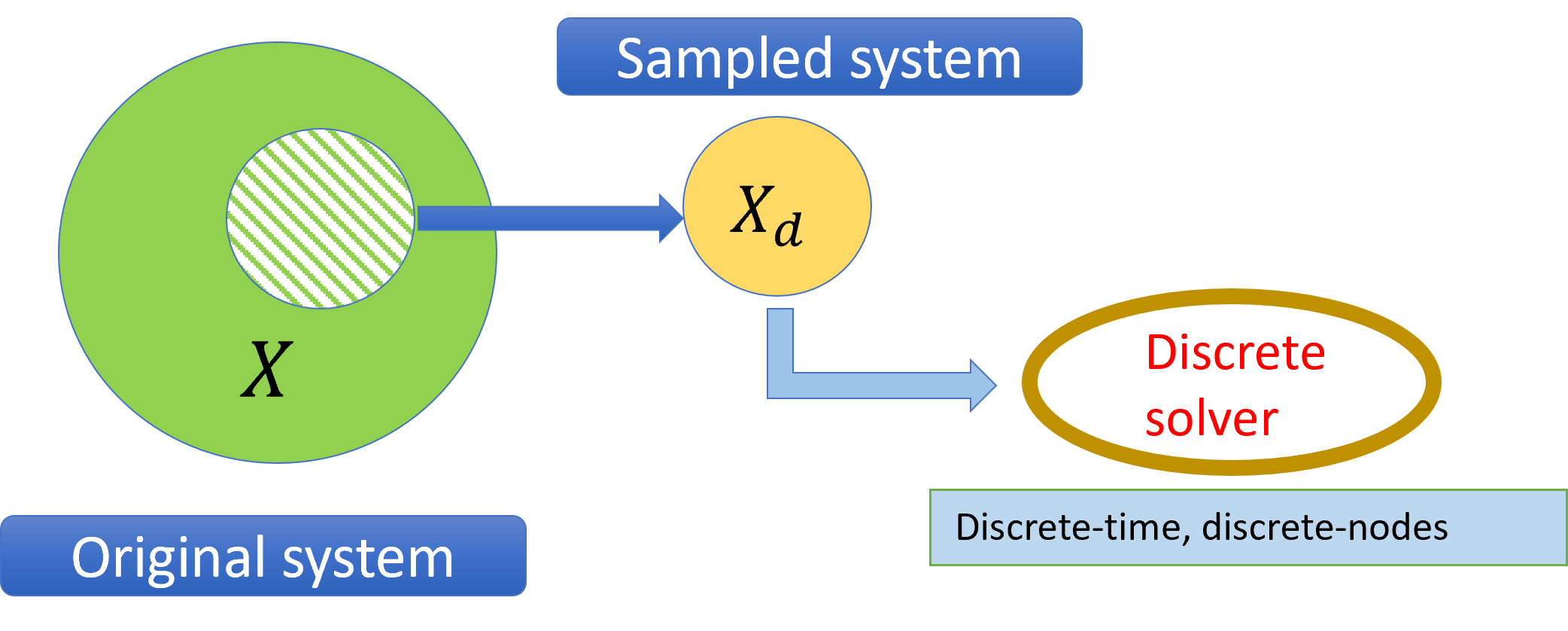}
\caption{\color{black} Sampled transition system for a discrete solver.}
\label{fig:sampled}
\end{figure}

\begin{definition}[Sampled Transition System]\label{def:sampled}
    Given an infinite state transition system, $S=(X,T,L,Y,H)$, we define a sampled transition system $S_d=(X_d,T_d,L,Y_d,H)$ by sampling each dimension of the state such that the sampled state space $X_d \subset X $. The set of transitions is denoted as $T_d = (X_d \times L \times X_d) \cap T$. Further, the set of outputs $Y_d=\{y\in Y\ |\  \exists x_d \in X_d \ \ s.t. \ \ y=H(x_d)\}$. 
\end{definition}

The sampled transition system $S_d=(X_d,T_d,L,Y_d,H)$ is now fully defined. Next, we define the specifications in the sampled transition systems, their key states, and key transitions.



\subsection{Key States, Key Transitions,  and Specifications}

\begin{figure*}[t]
  \centering
  \subfloat[Transition system $S$ with state space $X=\{A,B,C,D\}$. Key states are  $\{A,B,C,D\}$ where trajectory $\tau=A,\ell_0,B,\ell_1,C,\ell_2,D$  contains them.]{\includegraphics[width=0.4\textwidth]{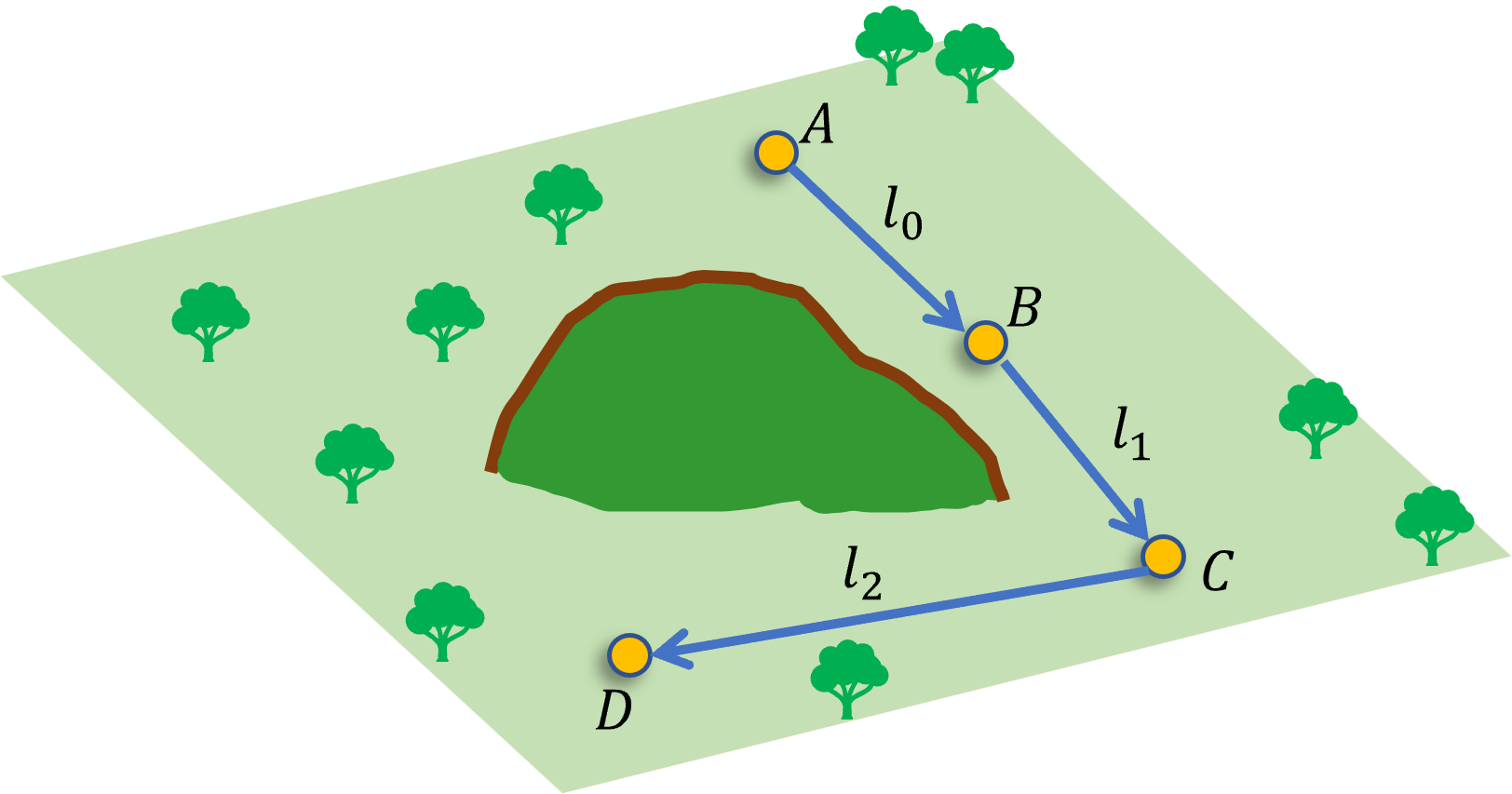}\label{fig:subfig1}}
  \qquad
  \subfloat[Sampled transition system $S_d$ with state space $X_d=\{A,C,D\}$. Key states are  $\{A,C,D\}$ and key transition is $(A,\ell^\prime_0,C)$ where trajectory $\tau_d=A,\ell^\prime_0,C,\ell^\prime_1,D$ contains them.]{\includegraphics[width=0.4\textwidth]{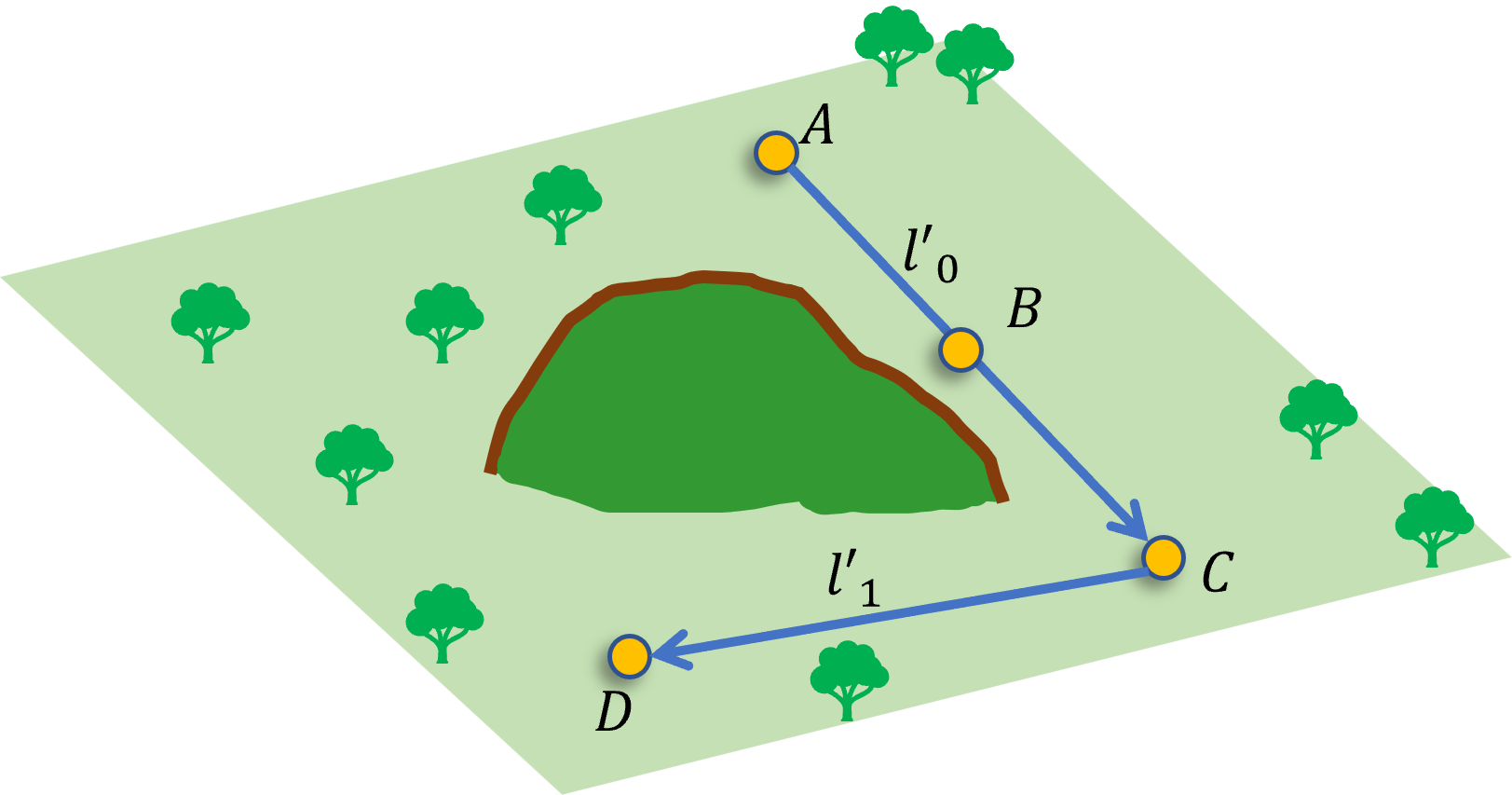}\label{fig:subfig2}}
  \caption{Key states and Key transitions: The state space $X$ of the transition system $S$ contains all key states. However, sampled transition system $S_d$ only contains $A,C,D$. Thus, key state $B$ forms a key transition for $\tau_d$ as $t=(A,\ell^\prime_0,C)$ as there exists an implementation for $\tau_d$ that traverses through $B$.}
  \label{fig:trajectory}
\end{figure*}

For a given task site assignment in the output space of the infinite system, we can find sets of states in a sampled system that would be mapped to the task site assignments. We identify these sets of states as key state classes.

\begin{definition}
[Key State Classes] Let $S_d=(X_d,T_d,L,Y_d,H)$ be a sampled transition system of an infinite state transition system, $S=(X,T,L,Y,H)$. For a given task site assignment $\Xi=\{\hat{Y_1},\hat{Y_2},\dotsc,\hat{Y_{N_y}}\}$, each $\hat{Y_i} \in \Xi$ such that $\hat{Y_i} \cap Y_d \neq \emptyset$ is said to form a key state class $\hat{X}_i$ for $\Xi$ in the sampled transition system $S_d$, i.e., $\hat{X}_i \subset X_d $, where $\hat{X}_i =\{x\in X_d \ |\ \ \exists\  y \in \hat{Y_i}\  \text{s.t.} \  y=H(x) \} $. 
\end{definition}

For a given task site assignment in the output space of the infinite system, it is possible that outputs in some $\hat{Y_i}$ do not exist in the output space of the sampled system, i.e., $\hat{Y_i} \cap Y_d = \emptyset$. Intuitively, this is because some key outputs happen in between samples. For such key outputs, we define key transition classes in the sampled transition system. Figure~\ref{fig:trajectory} illustrates an example instance, where Figure~\ref{fig:subfig1} shows a system that contains all key states and Figure~\ref{fig:subfig2} shows a sampled system that does not contain all key states, thus represents the missing states as key transitions.

\begin{definition}
[Key Transition Classes] Let $S_d=(X_d,T_d,L,Y_d,H)$  be a sampled transition system of an infinite state transition system, $S=(X,T,L,Y,H)$. For a given task site assignment $\Xi=\{\hat{Y_1},\hat{Y_2},\dotsc,\hat{Y_{N_y}}\}$, each $\hat{Y_i} \in \Xi$ such that $\hat{Y_i} \cap Y_d = \emptyset$ is said to form a key transition class $\hat{T}_i$  for $\Xi$ in the sampled transition system $S_d$, such that $\hat{T}_i \subset T_d$.
Any transition $\hat{t}=(x_1,\ell,x_2) \in T_d$ with a label $\ell\in L$ and $x_1,x_2 \in X_d$ belongs to key transition class $i$, i.e., $\hat{t} \in \hat{T}_i $, if there exists an implementation $\zeta_\tau$ for a trajectory $\tau=x_1,\ell,x_2$ and a timestamp $\gamma \in (0,\ell)$ such that $H(\zeta_\tau(\gamma))\in \hat{Y_i} $.
\end{definition}

Now, we have defined key transition classes that arise from the existence of a certain implementation whose output behavior contains task sites that do not exist in the output space of the sampled transition system. Next, we define the specification as the set of trajectories in the transition system that contains states from key state classes and transitions from key transition classes. 

\begin{definition}
[Specification] \label{def:Specifcation} Let $S_d$ be a sampled transition system, $\hat{X}_i$ for $i=0,\dotsc,N_x$ be a set of key state classes, $\hat{T}_j$ for $j=0,\dotsc,N_t$ be a set of key transition classes for a task site assignment $\Xi$. A specification $M$ is a set of execution trajectories of $S_d$, i.e., $M\subset \Sigma(S)$, such that for each trajectory $\tau \in M$, the following holds:\newline(1) For each key state class $\hat{X}_i$ for $i=0,\dotsc,N_x$, there exists a state $x \in \hat{X}_i $ such that $x\in \tau$.
\newline(2)  For each key transition class $\hat{T}_j$ for $j=0,\dotsc,N_t$, there exists a transition $t \in \hat{T}_j $ such that $t \in \tau$.

If an execution trajectory $\tau$ is in $M$, it is said to satisfy the specification.
\end{definition}


\subsection{Monotone Specifications}

We define the monotonicity property of a specification to describe the specification satisfaction by trajectories traversing through states that are greater in a defined partial ordering. 
\begin{definition}
[Monotone Specification] \label{def:mono_spec} A specification $M$ for a transition
system $S=(X,T,L,Y,H)$ is said to be a monotone specification iff the following condition is satisfied. Whenever a trajectory $\tau
:=x_{0}\ell_{0}x_{1}\ell_{1}\cdots\ell_{N-1}x_{N} \in M$ 
then any trajectory $\tau^{\prime}
:=x_{0}^{\prime}\ell_{0}x_{1}^{\prime}\ell_{1}\cdots\ell_{N-1}x_{N}^{\prime}$, where $\tau^{\prime}\succeq_{\tau}\tau$, i.e., %
\[
x_{0}^{\prime}\succeq x_{0},x_{1}^{\prime}\succeq x_{1}\cdots,x_{N}^{\prime
}\succeq x_{N}%
\]
also belongs to $M$.\newline
\end{definition}


 \color{black}
\subsection{Solver}

We define a solver as an oracle that manifests a trajectory (also referred to as a plan) that could be executed on a given transition system to satisfy a provided specification. For example, a solver can be an optimization engine, a constraint solver, a heuristic-driven algorithm, etc. In general, in addition to the constraints imposed by the transition system and the provided specification, a solver may be assigned a limited amount of computational resources and time.

\begin{definition}
[Solver]
Given a specification $M$ for a transition system $S$, an initial state $x_0$ and a maximum compute time $t^{\max}_{\mathrm{compute}}$, a solver $\Omega$ is an oracle that spends compute time $t^{\Omega}_{\mathrm{compute}}\in(0,t^{\max}_{\mathrm{compute}})$ to search for a specification satisfying execution trajectory $\tau \in M$ and returns it if found. 
\end{definition}

\subsection{Objective Function}
While the satisfaction of a specification ensures that the planning goal is achieved, a proposed plan's optimality can be evaluated using an objective function defined as follows.

\begin{definition}[Objective Function] For a monotone transition system $S$ with partial ordering $\succeq$, a function $f_S:\zeta_{\Sigma(S)} \to \mathbb{R} $ is said to be an objective function if for all trajectories $\tau_1,\tau_2 \in \Sigma(S)$ such that $\tau_1 \succeq_\tau \tau_2$, the objective $f_S(\zeta_{\tau_1}) \leq f_S(\zeta_{\tau_2})$.
\end{definition}
\color{black}

\color{black}


\section{Planning Problem in Transition systems}

\subsection{Planning problem}
In this section, we use the defined mathematical concepts to formally state the planning problem in transition systems. Let $S=(X, T, L, Y, H)$ denote an infinite state transition system with a properly defined set of states $X$, transitions $T$, and an output map $H$. Let $\Xi=\{\hat{Y_1},\hat{Y_2},\dotsc,\hat{Y_{N_y}}\}$ denote a provided task site assignment that needs to be satisfied. The goal is to find an implementation $\zeta \colon (0, \gamma) \to X$ of some duration $\gamma \geq 0$, such that the image of its output behavior, $\xi_\zeta \colon (0, \gamma) \to Y$ contains the task sites in the assignment, i.e., $\hat{Y_i} \subset \Ima (\xi_\zeta)$ for $i=1,2,\dotsc,N_y$. Further, an objective function, $f_S \colon \zeta_\Sigma(S) \to \mathbb{R}$, must evaluate the plan optimality when implemented in the system $S$. Therefore, we state the planning problem as an optimization problem as follows.

 \begin{equation}\label{eq:planprob}
 \minimize_{\text{ s.t. } \forall i\  \hat{Y{_i}} \subset \Ima (\xi_\zeta)} f_S(\zeta),
 \end{equation}
 where $\Xi_N $ is the task site assignment.

 Solving the optimization problem in Eq. \eqref{eq:planprob} is computationally expensive due to the infinite state space of the system $S$. Solving for the optimal plan may be infeasible under limited computation resources. Therefore, we define a sampled transition system $S_d=(X_d,T_d,L,Y_d,H)$, creating a discrete state space for a solver to operate. Further, for the task site assignment $\Xi$, we define a set of key state and key transition classes for the sampled transition system $S_d$. Then a specification $M \subset \Sigma(S)$ can be defined using the key state and key transition classes. Now, the problem reduces to finding an execution trajectory $\tau \in M$, such that its implementation $\zeta_\tau$ minimizes the objective function. To this end, we redefine the optimization problem in Eq. \eqref{eq:planprob_discrete} such that solving it yields a specification $M$ satisfying plan in the sampled system.

 \begin{equation}\label{eq:planprob_discrete}
 \minimize_{\tau \in  M} f_S(\zeta_\tau).
 \end{equation}

\subsection{Solution Framework}

\begin{figure}[h]
\centering
\includegraphics[width=0.35\textwidth]{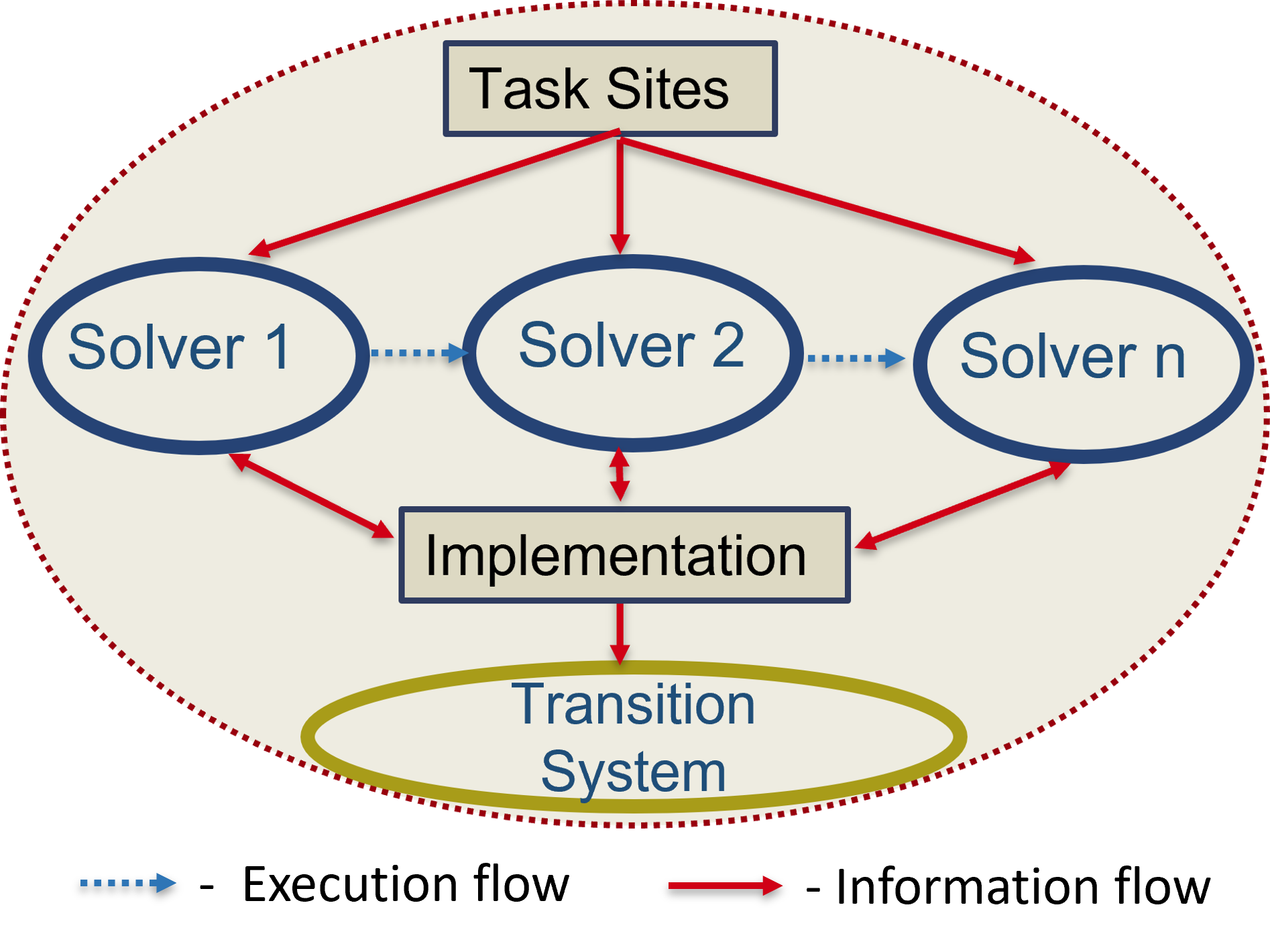}
\caption{Problem space for iterative planning: Solvers are called iteratively with the task site assignment and the current implementation to generate a better plan to be implemented on the transition system.}
\label{fig:probsp}
\end{figure}

Solvers with different capabilities produce solutions with varying levels of sub-optimality using varying computation times. Therefore, we propose a solution framework that enables an iterative solution optimization procedure by creating a problem space that allows multiple solvers to operate and solve the planning problem to formulate better solutions. The proposed framework is shown in Figure~\ref{fig:probsp}. In the problem space, the original infinite state system $S$ and task site assignment $\Xi$ that needs to be satisfied are defined. The goal is to find an implementation in the infinite state system to satisfy $\Xi$. The plan computation is done by invoking each solver iteratively for continual improvement. The proposed framework enables invoking faster solvers earlier to find a feasible solution and then continue to use the available computation resources to optimize the solution further. 

Each solver in the problem space may operate in its own discrete or bounded and continuous state spaces of a sampled transition system $S_d$ and specification $M$ to find an execution trajectory. However, the proposed iterative planning scheme requires that each sampled transition system must agree on the feasibility of the plan provided by the previous solver and its satisfaction of the specification. 


\subsubsection{Specification Satisfaction}

When a solver $\Omega$ is assigned $t^{\Omega}_{\mathrm{compute}}$ computation time to solve the optimization problem, it will compute a plan $\tau$ as the solution. However, bounded computation time may result in a sub-optimal solution in terms of the objective function. Further, solving the optimization problem in Eq. \eqref{eq:planprob_discrete} yields a plan in the corresponding sampled transition system. Therefore, we state the following theorem  on specification satisfaction.

\begin{theorem}[Specification Satisfaction] \label{theo:task_sat}
Given a sampled transition system $S_d=(X_d,T_d,L,Y_d,H)$ of an infinite state transition system $S=(X, T, L, Y, H)$, and a specification $M \subset \Sigma(S_d)$ for a task site assignment $\Xi$, if a plan(trajectory) $\tau \in \Sigma(S_d)$ satisfies $M$, then its implementation $\zeta$ has the output behavior $\xi_\zeta$ that satisfies the task site assignment $\Xi$.
\end{theorem}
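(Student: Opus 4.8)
The plan is to read ``its implementation $\zeta$'' as an existential claim, i.e., to exhibit at least one implementation $\zeta$ of $\tau$ whose output behavior $\xi_\zeta$ meets every output class of $\Xi$. Since satisfying $\Xi$ means $\hat{Y_i}\cap\Ima(\xi_\zeta)\neq\emptyset$ for every $i\in\{1,\dots,N_y\}$, I would first invoke the dichotomy built into the definitions of key state and key transition classes: each $\hat{Y_i}\in\Xi$ either meets the sampled output space ($\hat{Y_i}\cap Y_d\neq\emptyset$), in which case it induces a key state class $\hat{X}_i$, or is disjoint from it ($\hat{Y_i}\cap Y_d=\emptyset$), in which case it induces a key transition class $\hat{T}_i$. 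Thus the index set partitions into a ``state'' part and a ``transition'' part, and it suffices to certify one witnessing time in $\Ima(\xi_\zeta)$ for each class separately.

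For the key state classes the argument is direct. Because $\tau\in M$, Definition~\ref{def:Specifcation}(1) supplies, for each such $\hat{X}_i$, a state $x\in\hat{X}_i$ with $x\in\tau$, so $x=x_j$ for some index $j$. By the definition of $\hat{X}_i$ there is $y\in\hat{Y_i}$ with $y=H(x_j)$. Evaluating the output behavior at the sample time $\gamma_j=\sum_{k=0}^{j-1}\ell_k$ gives $\xi_\zeta(\gamma_j)=H(\zeta(\gamma_j))=H(x_j)=y\in\hat{Y_i}$, so $\hat{Y_i}\cap\Ima(\xi_\zeta)\neq\emptyset$ for \emph{any} implementation $\zeta$ of $\tau$. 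I would flag the minor boundary point that the implementation's domain is the open interval $(0,\sum_i\ell_i)$, so the extreme indices $j=0$ and $j=N$ must be handled by evaluating at an interior sample instant or by taking the obvious one-sided value; this is a definitional technicality, not a real gap.

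The substantive work is in the key transition classes, and this is where I would build the implementation rather than take it as given. For each $\hat{T}_i$, Definition~\ref{def:Specifcation}(2) yields a transition $t=(x_j,\ell_j,x_{j+1})\in\hat{T}_i$ lying in $\tau$, and the definition of the key transition class guarantees that there exists \emph{some} implementation of the single-step trajectory $x_j\,\ell_j\,x_{j+1}$ together with a time $\gamma'\in(0,\ell_j)$ at which its output lands in $\hat{Y_i}$. The catch is that this witness is only for a local, one-transition implementation, whereas the theorem speaks of a global implementation of all of $\tau$; a generic implementation of $\tau$ need not route through $\hat{Y_i}$ on the $j$-th segment. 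I would therefore assemble $\zeta$ segment by segment: on each transition of $\tau$ pick an implementation of that single step (interpolating $x_j$ and $x_{j+1}$ at its endpoints), and on those segments chosen to witness a key transition class select precisely the implementation provided by the definition. Concatenating these pieces with the time shifts $\sum_{k=0}^{j-1}\ell_k$ yields a function on $(0,\sum_i\ell_i)$ meeting the interpolation conditions of an implementation, and on each witnessing segment the shifted time $\sum_{k=0}^{j-1}\ell_k+\gamma'$ produces an output in $\hat{Y_i}$.

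The main obstacle is this gluing step. I must check that the concatenation is a genuine implementation of $\tau$ (correct domain $(0,\sum_i\ell_i)$ and correct values $x_i$ at the sample instants), and, more delicately, that the per-segment choices needed to witness the various key transition classes do not conflict. No conflict arises when distinct key transition classes are witnessed by distinct transitions of $\tau$, since then the local implementations are selected on disjoint time segments and stitched independently; the degenerate case in which a single transition is the sole witness of two classes would require either an extra hypothesis or a local implementation that simultaneously realizes several key outputs. Granting the generic situation, collecting the witnesses from both cases gives $\hat{Y_i}\cap\Ima(\xi_\zeta)\neq\emptyset$ for all $i$, so $\xi_\zeta$ satisfies $\Xi$ and the argument is complete, with only the open-interval convention noted above left to dispatch.
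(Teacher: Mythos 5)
Your proof is correct and follows the same basic decomposition as the paper's: split $\Xi$ into the classes that induce key state classes and those that induce key transition classes, handle the former by evaluating $\xi_\zeta$ at the sample instants, and handle the latter via the implementation witness built into the definition of a key transition class. Where you differ is in rigor on the second half, and the difference is to your credit. The paper's proof simply asserts that ``for an implementation $\zeta$ of any trajectory $\tau\in M$, there exists a $\gamma\geq 0$ such that $H(\zeta(\gamma))\in\hat{Y}_i$ holds for all $\hat{Y}_i\in\Xi$,'' quietly treating the key-transition case as if it held for an arbitrary implementation; but the definition of a key transition class only guarantees that \emph{some} implementation of the one-step trajectory $x_1\,\ell\,x_2$ passes through $\hat{Y_i}$, so a generic implementation of $\tau$ need not. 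Your explicit segment-by-segment gluing construction — choosing the witnessing local implementation on each key transition and concatenating with the appropriate time shifts — is exactly what is needed to make the existential reading of ``its implementation'' precise, and it is absent from the paper. Your two flagged caveats (the open-interval domain at the endpoints, and the degenerate case where a single transition must witness two distinct classes) are also real, if minor; the paper does not address either. One small additional discrepancy worth noting: the paper's concluding line writes $\hat{Y_i}\subset\Ima(\xi_\zeta)$, whereas the definition of task site assignment satisfaction only requires $\hat{Y_i}\cap\Ima(\xi_\zeta)\neq\emptyset$; you use the correct (weaker) condition throughout.
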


\begin{proof}
For the given task site assignment $\Xi$, let $\hat{X}_i$ for $i=0,\dotsc,N_x$ denote the key states classes and $\hat{T}_j$ for $j=0,\dotsc,N_t$ denote the key transition classes.  
If a plan $\tau$ satisfies $M \subset \Sigma(S_d)$ then $\tau \in M$. From the definition of the specification,  any trajectory $\tau \in M$ must contain some state $x \in \hat{X}_i$ for all $i=0,\dotsc, N_x$ and some transition $t \in \hat{T}_j$ for all $j=0,\dotsc,N_t$.
From the definition of key states and key transitions, for an implementation $\zeta$ of any trajectory $\tau \in M$, there exists a $\gamma \geq 0$ such that $H(\zeta(\gamma))\in \hat{Y}_i$ holds for all  $\hat{Y}_i \in \Xi$. 
Therefore, the output behavior of $\zeta$ contains the task sites in its image, i.e., $\hat{Y_i} \subset \Ima (\xi_\zeta)$ for $i=0,\dotsc,N_x+N_y$. Thus, implementation of any trajectory $\tau \in M$ yields an output behavior $\xi_\zeta$, satisfying the task site assignment $\Xi$.
\end{proof}


Therefore, any solver that operates on a sampled transition system can find solutions that satisfy the task site assignment $\Xi$ when implemented in the infinite state system. 

\subsubsection{Ability of Improvement}

 As the solvers operate on a sampled transition system $S_d=(X_d,T_d,L,Y_d,H)$, with a bounded computation time, a plan produced by the solvers could be sub-optimal in terms of the objective function $f_S$. Therefore, for a produced trajectory $\tau$, if an improvement $\tau ^\prime \succeq_\tau \tau$ exists such that  $x_i^\prime \in \tau_d$ for any state $ x_i \in \tau$ where $x_i^\prime\succeq x_i$, the transitions in $\tau^\prime$ must exist in $T_d$ under the replacement of $x_i$ by  $x_i^\prime$. Thus, for any $t_ij=(x_i,\ell,x_j)\in \tau$ and $x_i^\prime\succeq x_i$, a $t_{ij}^\prime=(x_i^\prime,\ell,x_j)\tau^\prime$ must exist in $T_d$. 
 To this end, we state the following theorem on the monotonicity of the sampled transition systems.

\begin{theorem}[Monotonicity in Sampled Transition System]\label{theo:mono}
Given a monotone transition system  $S=(X, T, L, Y, H)$ with respect to a partial ordering $\succeq$, a sampled transition system $S_d=(X_d,T_d,L,Y_d,H)$ of $S$ is also monotone with respect to $\succeq$.
\end{theorem}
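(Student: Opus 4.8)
The plan is to verify the monotonicity condition of Definition~\ref{def:monotonesys} directly for $S_d$, reducing it to the already-assumed monotonicity of $S$ together with the defining relation $T_d = (X_d \times L \times X_d) \cap T$ from Definition~\ref{def:sampled}. The whole argument is a membership-chasing exercise: nothing about the output map $H$ or the sampled output set $Y_d$ is needed, so I would isolate the transition structure alone.

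First I would fix an arbitrary transition $(x_1,\ell,x_1') \in T_d$ and an arbitrary $x_2 \in X_d$ with $x_2 \succeq x_1$; the goal is to exhibit a transition $(x_2,\ell,x_1') \in T_d$. Unpacking the definition of $T_d$, the hypothesis $(x_1,\ell,x_1') \in T_d$ supplies two facts: the transition lives in the original system, $(x_1,\ell,x_1') \in T$, and both endpoints are sampled, $x_1,x_1' \in X_d$. Since $X_d \subset X$, the chosen state $x_2$ also lies in $X$, so the triple $(x_1,x_1',x_2)$ meets exactly the premises of the monotonicity property of $S$.

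Next I would invoke the monotonicity of $S$: from $(x_1,\ell,x_1') \in T$ and $x_2 \succeq x_1$ it produces a transition $(x_2,\ell,x_1') \in T$, carrying the same label $\ell$ and, crucially, the same successor $x_1'$. The final step is to close this transition back into the sampled system. Membership in $T_d$ requires both endpoints to lie in $X_d$: the successor $x_1'$ is left unchanged by the monotonicity property and was already shown to lie in $X_d$, while the predecessor $x_2$ was chosen from $X_d$ by assumption. Hence $(x_2,\ell,x_1') \in (X_d \times L \times X_d) \cap T = T_d$, which is precisely the conclusion demanded by Definition~\ref{def:monotonesys} applied to $S_d$.

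The one delicate point — which I would flag as the crux rather than a genuine obstacle — is that the successor state must be preserved in order for the new transition to remain inside the sampled state space. This is exactly what the form of Definition~\ref{def:monotonesys} guarantees, since it fixes the target $x_1'$ and relaxes only the source; had the definition been permitted to alter the successor, the resulting transition might leave $X_d$ and the theorem could fail. Because the source is taken from $X_d$ and the target is untouched, the sampling restriction is respected automatically, and the proof requires no further case analysis.
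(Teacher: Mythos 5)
Your proof is correct and follows essentially the same route as the paper: verify the monotonicity condition for $S_d$ directly by lifting the transition to $T$, applying monotonicity of $S$, and closing back into $T_d$. In fact, your final step is slightly more careful than the paper's, since you explicitly justify membership of $(x_2,\ell,x_1')$ in $T_d$ via the definition $T_d=(X_d\times L\times X_d)\cap T$ and the fact that the successor $x_1'$ is unchanged, whereas the paper simply asserts that the new transition lies in $T_d$.
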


\begin{proof}
For a transition system $S=(X, T, L, Y, H)$, the monotonicity condition states that the following is true for every
$x_{1},x_{1}^{\prime}\in X$ and $\ell\in L$, such that $(x_{1},\ell,x_{1}^{\prime}) \in T$: For all $x_{2}\in X$ such that $x_{2}\succeq x_{1},$ there exists a transition $(x_{2},\ell,x_{1}^{\prime})\in T$.
From the definition of sampled transition system $S_d=(X_d,T_d,L,Y_d,H)$, it is known that $X_d \subset X$, and $T_d \subset T$. For all $x_{1},x_{1}^{\prime}, x_{2}\in X_d$ and  $\ell\in L$, such that $(x_{1},\ell,x_{1}^{\prime}) \in T_d$ and $x_{2}\succeq x_{1}$, a transition $t^d=(x_{2},\ell,x_{1}^{\prime})$ exists as $x_{1},x_{1}^{\prime},x_{2} \in X$ where $S=(X, T, L, Y, H)$ is monotone. As the transition, $t^d$ exits in $T_d$, the sampled transition system $S_d$ is monotone.
\end{proof}
\color{black}



\subsection{Iterative planning}

\begin{algorithm}
\RestyleAlgo{ruled}
\caption{Iterative Planning}\label{alg:planning}
\KwData{$S$- Infinite state transition system, $\Xi$ - Task site assignment, $x_0$ - Initial state, $(\Omega_1,\dotsc,\Omega_n)$ - Sequence of solvers,  $t^{total}_{compute}$- Max total compute time,  $\gamma_{H}$- Horizon length}
\KwResult{$\zeta_{final}$- Final plan implementation}
$i \gets 1$\;
 Set $S_{d_i}$ to a sampled transition system of $S$ \;
 Define specification $M \in \Sigma(S_{d_i})$ from $\Xi$\;
 Get current time $t_{start}=clock()$\;
 $t^{i}_{compute}\gets t^{total}_{compute}+ t_{start}-clock()$\;
 \If{$t^{i}_{compute}>0$}{
    Call $\Omega_i$ with compute time $t^{i}_{compute}$ to find a plan $\tau_{d_i} \in M$ starting from $x_0$ for a horizon of $\gamma_{H}$\;
    Set $\zeta_i\gets\zeta_{\tau_{d_i}}$\;
}

\For{$i=2,\dotsc,n$}{
 Define $S_{d_i}$  such that $ \{ x\ | \  \exists \gamma\in \Gamma \ x=\zeta(\gamma)\} \subseteq X_{d_i} $ where the set of time stamps $\gamma^d=\gamma^d_0,\dotsc,\gamma^d_N$ \;
 Define the specification $M \in \Sigma(S_{d_i})$ from $\Xi$\;
 $t^{i}_{compute}\gets t^{total}_{compute}+ t_{start}-clock()$\;
 \If{$t^{i}_{compute}>0$}{
Call $\Omega_i$ with compute time $t^{i}_{compute}$ to find a plan $\tau_{d_i} \in M$ starting from $x_0$ for a horizon of $\gamma_{H}$\;
    \If{$ f_S(\zeta_{\tau_{d_i}}) \leq f_S(\zeta_{i-1}) $}
    { Set $\zeta_i\gets\zeta_{\tau_{d_i}}$\;
    }
    \Else    {Set $\zeta_i\gets \zeta_{i-1}$\;
    }
}
\Else    {break\;
    }
}
\Return $\zeta_{final}=\zeta_i$\;

\end{algorithm}

In the proposed iterative planning framework, the solvers work on finding a plan satisfying the specification. Each solver takes in the specification and creates a sampled transition system instance in the solver space. Then, it computes the solution as a trajectory and returns it to the infinite state system as an implementation. When the next solver is called for plan computation, it must identify the current implementation $\zeta$ that satisfies the task site assignment and corresponds to a specification satisfying trajectory in the solver's sampled transition system. To this end, we state the following theorem on the feasibility of the current implementation through iterative planning instances.

\begin{theorem}[Recursive feasibility]\label{theorem:feasibility}
    Given a monotone transition system  $S=(X, T, L, Y, H)$ and an implementation $\zeta$, whose output behavior satisfies a task site assignment $\Xi=\{\hat{Y_1},\hat{Y_2},\dotsc,\hat{Y_N}\}$, let $S_d=(X_d,T_d,L,Y_d,H)$ be a sampled transition system of $S$, and $M$ be a specification defined from $\Xi$. For any set of time stamps $\Gamma=\{\gamma_0,\dotsc,\gamma_\mu\}$, if $ \{ x\ | \  \exists \gamma\in \Gamma x=\zeta(\gamma)\} \subseteq X_d $, then there exists a trajectory $\tau \in \Sigma(S_d)$ such that $\tau$ satisfies $M$.
\end{theorem}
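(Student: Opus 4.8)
The plan is to turn the given implementation $\zeta$ into an explicit candidate trajectory in $S_d$ and then verify two things separately: that this candidate is a legal execution trajectory of $S_d$, and that it meets every clause of the specification $M$ from Definition~\ref{def:Specifcation}. Since $\zeta$ is an implementation, it is $\zeta_{\tau_\zeta}$ for some execution trajectory $\tau_\zeta$ of $S$. Ordering the timestamps so that $\gamma_0 < \gamma_1 < \cdots < \gamma_\mu$, I would set $q_k := \zeta(\gamma_k)$ and $\ell_k := \gamma_{k+1}-\gamma_k$, and propose $\tau := q_0\,\ell_0\,q_1\,\ell_1\cdots \ell_{\mu-1}\,q_\mu$ as the witness.

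First I would show $\tau \in \Sigma(S_d)$. Each $q_k$ lies in $X_d$ directly by the hypothesis $\{x \mid \exists \gamma\in\Gamma,\ x=\zeta(\gamma)\}\subseteq X_d$. For the transitions, I would use $T_d=(X_d\times L\times X_d)\cap T$, so it suffices to argue $(q_k,\ell_k,q_{k+1})\in T$. Here the restriction of $\zeta$ to $[\gamma_k,\gamma_{k+1}]$ is the witness: it is a valid evolution of $S$ from $q_k$ to $q_{k+1}$ over duration $\ell_k$, hence realizes a transition of $S$ with label $\ell_k$. Since both endpoints are sampled, that transition also lies in $T_d$, making $\tau$ an execution trajectory of $S_d$.

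Next I would verify $\tau$ satisfies $M$, checking each $\hat{Y}_i\in\Xi$. Because $\xi_\zeta$ satisfies $\Xi$, for each $i$ there is a time $\gamma^\ast_i$ with $H(\zeta(\gamma^\ast_i))\in\hat{Y}_i$. The argument splits on whether $\hat{Y}_i$ induces a key transition class or a key state class. If $\hat{Y}_i\cap Y_d=\emptyset$ (key transition class $\hat{T}_i$), then $H(\zeta(\gamma^\ast_i))\notin Y_d$ forces $\zeta(\gamma^\ast_i)\notin X_d$, so $\gamma^\ast_i$ lies strictly between two samples $\gamma_k<\gamma^\ast_i<\gamma_{k+1}$; the transition $(q_k,\ell_k,q_{k+1})\in\tau$ then admits a sub-implementation (the restriction of $\zeta$) whose output enters $\hat{Y}_i$ at an interior time, which is exactly the defining condition for $\hat{T}_i$. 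This case closes cleanly and mirrors the reasoning in Theorem~\ref{theo:task_sat}. If instead $\hat{Y}_i\cap Y_d\neq\emptyset$ (key state class $\hat{X}_i$), I must exhibit a sampled state of $\tau$ that lies in $\hat{X}_i$.

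I expect this key state class case to be the main obstacle. A priori the visit time $\gamma^\ast_i$ need not belong to $\Gamma$, so $\zeta(\gamma^\ast_i)$ need not appear among the $q_k$, and since $\hat{X}_i=\{x\in X_d\mid \exists y\in\hat{Y}_i,\ y=H(x)\}$ demands membership in $X_d$, an off-grid visit does not immediately furnish a key state in $\tau$. Closing this gap requires using how the grid is chosen: as in Algorithm~\ref{alg:planning}, $S_d$ is constructed so that the previous implementation's states at the relevant timestamps are retained in $X_d$, which is precisely what aligns each key-state visit with a sample. Where the exact visited state is not itself sampled, I would invoke monotonicity of $S$ (and of $S_d$ via Theorem~\ref{theo:mono}) to substitute a dominating sampled state $q_k\succeq\zeta(\gamma^\ast_i)$ that produces the same output and therefore still falls in $\hat{X}_i$. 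A secondary point to make rigorous is the transition-existence claim of the first step, which presumes a duration-$\ell_k$ segment of a valid implementation is a single transition of $S$; I would justify this as a modeling property of the continuous system, again falling back on monotonicity if an available sampled endpoint must be used in place of the exact one.
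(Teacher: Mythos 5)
Your proposal is correct and follows essentially the same route as the paper's proof: the same candidate trajectory built from the sampled timestamps, the same verification that its states lie in $X_d$ and its transitions in $T_d=(X_d\times L\times X_d)\cap T$ via restrictions of $\zeta$, and the same case split between key state classes ($\hat{Y}_i\cap Y_d\neq\emptyset$) and key transition classes. The ``main obstacle'' you flag in the key-state case is resolved in the paper exactly by your first suggestion --- the proof stipulates that $\Gamma$ contains the timestamps at which $\zeta$ visits each key-state-class output (which is how Algorithm~\ref{alg:planning} constructs $X_{d}$) --- so the monotonicity fallback is not needed.
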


\begin{proof}
For a given transition system  $S=(X, T, L, Y, H)$, and any implementation $\zeta$, let $x_i=\zeta(\gamma_i) \in X$ for  $\gamma_i \in \Gamma$. Let time durations $\gamma^d_i=(\gamma_{i+1} - \gamma_i)\in L$ for $i\in\{0,\dotsc,\mu-1\}$.  Then a corresponding trajectory to $\zeta$ in $S$ can be defined as $\tau
=x_{0}\gamma^d_0 x_{1}\gamma^d_1 \cdots \gamma^d_{\mu-1}x_{\mu} \in \Sigma(S)$. However, for a sampled transition system $S_d=(X_d,T_d,L,Y_d,H)$ sampled from $S$, the existence of $\tau \in \Sigma(S_d) $ requires that each $x_i\in \tau$ must be in $X_d$. Therefore, if $ \{ x\ | \  \exists \gamma\in \Gamma x=\zeta(\gamma)\} \subseteq X_d $ for some  $\Gamma$, we see that there exists a $\tau \in \Sigma(S_d)$. 

If $M$ is a specification in $S_d$ defined from $\Xi$, any $\tau \in M$ must contain states from each key state class and transition from each key transition class.
From the definition of key state classes, let a subset of task site allocation, $\Xi_x=\{\hat{Y}_{j}\in \Xi \ |\ \hat{Y}_{j}\cap Y_d \neq \emptyset \}$ form key state classes in $S_d$.
For a given implementation $\zeta$ that satisfies a task site assignment $\Xi$, there exists some $\gamma_j \geq 0$ such that $H(\zeta(\gamma_j)) \in \hat{Y}_{j}$ for all $j=1,\dotsc,N$. Let $\Gamma$ be defined such that $\{\gamma_j \ |\  \exists \  \hat{Y}_{j}\in \Xi_x H(\zeta(\gamma_j)) \in \hat{Y}_{j}\  \} \subseteq \Gamma$. For $\gamma_i\in \Gamma$ and time durations $\gamma^d_i=(\gamma_{i+1} - \gamma_i)\in L$ for $i\in\{0,\dotsc,\mu-1\}$, a corresponding trajectory to $\zeta$ defined as $\tau
=x_{0}\gamma^d_0 x_{1}\gamma^d_1 \cdots \gamma^d_{\mu-1}x_{\mu}$ contains states from each key state class, i.e.  there exists $\gamma_j \in \Gamma$, such that $\zeta(\gamma_j)\in \tau$ and $H(\zeta(\gamma_j)\in \hat{Y}_{j}$ for all $\hat{Y}_{j}\in\Xi_x$.


Let $\Xi_t =\Xi \backslash \Xi_x$ be the set of task sites that do not form key state classes in $S_d$, i.e., $\Xi_t=\{\hat{Y}_{j}\in \Xi \ |\ \hat{Y}_{j}\cap Y_d = \emptyset \}$. Therefore, they form key transition classes. From the given existence of implementation $\zeta$ that satisfies $\Xi$, there exists some $\gamma_j \geq 0$ such that $H(\zeta(\gamma_j))\in \hat{Y}_{j}$ for all $\hat{Y}_{j}\in \Xi_t$. Therefore, a corresponding trajectory to $\zeta$ defined as $\tau
=x_{0}\gamma^d_0 x_{1}\gamma^d_1 \cdots \gamma^d_{\mu-1}x_{\mu}$ contains transitions from each key transition class by definition. 
Since $\tau \in \Sigma(S_d)$ contains states from each key state class and transitions from each key transition class, it satisfies $M$. Thus, a trajectory $\tau \in \Sigma(S_d)$ exists such that  $\tau
=x_{0}\gamma^d_0 x_{1}\gamma^d_1 \cdots \gamma^d_{N-1}x_{N}$ satisfies $M$.
\end{proof}

\color{black}
Theorem \ref{theorem:feasibility} establishes that given the set of states, $X_d$ of a sampled transition system includes states sampled from an implementation  that satisfies a task site assignment, there exists a specification-satisfying trajectory in the sampled transition system of the solver. This ensures the existing plan remains feasible for each sampled transition system when solvers are iteratively called through the planning framework. 

From Theorem \ref{theorem:feasibility}, each solver in the iterative planning framework can compute solutions while the current implementation remains feasible. Each solver employed in the framework can be invoked to generate solutions with better objective values. To this end, we propose Algorithm \ref{alg:planning}. 

Iterative planning takes the transition system $S$, task site assignment $\Xi$, initial state $x_0$ and a set of solvers $\{\Omega_1,\Omega_2, \dotsc, \Omega_n\}$. Each solver is allowed the maximum computation time. At each step of planning in the iterative framework, a sampled transition system $S_d$ and a specification are created such that Theorem \ref{theorem:feasibility} is satisfied. 

\begin{corollary}
    Let $S$ be a transition system and $\Xi$ be a task site assignment. Let $\Omega_i$ s.t. $i>1$ be a solver in Algorithm \ref{alg:planning} and $S^i_d$ be the corresponding sampled transition system. For an implementation $\zeta_{i-1}$ found from the previous solver $\Omega_{i-1}$, there exists an implementation $\zeta_{i} \in \zeta_{\Sigma(S^i_d)}$ whose output behavior satisfies $\Xi$. 
\end{corollary}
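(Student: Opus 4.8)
The plan is to recognize this corollary as a direct consequence of the recursive feasibility result, Theorem \ref{theorem:feasibility}, chained with the specification satisfaction result, Theorem \ref{theo:task_sat}, after supplying the one ingredient the statement leaves implicit: that $\zeta_{i-1}$ itself has output behavior satisfying $\Xi$. First I would establish this by induction along the solver sequence. The base solver $\Omega_1$ returns the implementation of a trajectory $\tau_{d_1} \in M$, so by Theorem \ref{theo:task_sat} its output behavior satisfies $\Xi$; and at every later step the algorithm sets $\zeta_{i-1}$ either to the implementation of a freshly found trajectory $\tau_{d_{i-1}} \in M$ or to the retained $\zeta_{i-2}$, both of which satisfy $\Xi$ by the same theorem together with the inductive hypothesis. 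Hence I may assume $\zeta_{i-1}$ satisfies $\Xi$.

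Next I would invoke the construction of the sampled transition system inside the loop of Algorithm \ref{alg:planning}. The line defining $S_{d_i}$ requires $\{x \mid \exists\, \gamma \in \Gamma\ \ x = \zeta(\gamma)\} \subseteq X_{d_i}$ for the time stamps $\Gamma$ sampled from $\zeta_{i-1}$ (with $S_{d_i}$ being the $S^i_d$ of the corollary statement). This is precisely the hypothesis of Theorem \ref{theorem:feasibility} with $\zeta = \zeta_{i-1}$ and $S_d = S^i_d$. I would take $\Gamma$ as in the proof of that theorem, so that it contains every timestamp at which $\zeta_{i-1}$ produces an output lying in a key state class of $S^i_d$; the algorithm's sampling step is assumed to place exactly these states into $X_{d_i}$, which is what makes the subset condition hold.

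Applying Theorem \ref{theorem:feasibility} then yields a trajectory $\tau \in \Sigma(S^i_d)$ that satisfies the specification $M$ defined from $\Xi$ in $S^i_d$. Feeding this $\tau$ into Theorem \ref{theo:task_sat} gives that its implementation $\zeta_\tau$ has output behavior $\xi_{\zeta_\tau}$ satisfying $\Xi$. Setting $\zeta_i := \zeta_\tau$ completes the argument, since $\tau \in \Sigma(S^i_d)$ implies $\zeta_i \in \zeta_{\Sigma(S^i_d)}$ directly from the definition of the set of implementations of $S^i_d$.

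The main obstacle is not a hard estimate but the bookkeeping of hypotheses: I must verify that the timestamp set $\Gamma$ guaranteed by the algorithm's construction of $S_{d_i}$ coincides with the $\Gamma$ needed to cover all key state classes in Theorem \ref{theorem:feasibility}, and that task sites which fall strictly between samples are accounted for as key transition classes rather than key state classes (the $\Xi_x$ versus $\Xi_t$ split in that theorem's proof). Once these two sampling notions are reconciled, the corollary follows by composing the two theorems with the inductive feasibility of $\zeta_{i-1}$, with no further analysis required.
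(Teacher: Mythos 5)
Your proposal is correct and follows essentially the same route as the paper, which simply cites Theorem \ref{theorem:feasibility} and the algorithm's sampling condition as a direct consequence. You additionally make explicit two steps the paper elides — the induction showing $\zeta_{i-1}$ itself satisfies $\Xi$ (a hypothesis Theorem \ref{theorem:feasibility} needs) and the final pass through Theorem \ref{theo:task_sat} to convert ``$\tau$ satisfies $M$'' into ``the output behavior of $\zeta_\tau$ satisfies $\Xi$'' — both of which are genuine gaps in the paper's one-line argument that your version correctly fills.
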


\begin{proof}
    This is a direct consequence of Theorem \ref{theorem:feasibility}. In Algorithm \ref{alg:planning}, each $S^i_d$ is sampled such that \ref{theorem:feasibility} is satisfied. It guarantees the existence of a $\tau \in \Sigma(S^i_d)$ such that the output behavior of its implementation $\zeta_{i} \in \zeta_{\Sigma(S^i_d)}$ satisfies $\Xi$. 
\end{proof}

This ensures the recursive feasibility of the planning framework. Then, the corresponding solver is allowed the given computation time for plan computation. If the computed plan is better than the current plan, the current plan is updated to the computed plan. Once all the solvers are iterated or the allowed compute time runs out, the current plan is returned for implementation. Note that in Algorithm \ref{alg:planning}, $clock()$ indicates the Timer function.

The planning scheme described in Algorithm \ref{alg:planning} has several advantages compared to the direct execution of a plan from a single solver.
\begin{enumerate}
    \item Multiple solver integration - 
    Provided that the corresponding sampled transition system satisfies Theorem \ref{theorem:feasibility}, any solver can be interfaced seamlessly with the planning framework. This enables a versatile implementation that can iterate over solvers with varying sampled state spaces. 
    \item Compute time constraints - Allocation of compute time for each solver as well as for the total planning scheme allows time-sensitive planning tasks to be completed with the planning framework.
    \item Continual improvement - At each iterative call to the solvers, an improvement constraint is imposed on the next solution. This enables continuous improvement of the solution toward optimality. 
    \item Implementable solution - At each step of planning, an implementation in the original system 
  that satisfies the task site assignment is always at hand. If the next solver called iteratively cannot provide a better plan, the existing plan can be directly executed on the infinite state system to satisfy the task site allocation.
\end{enumerate}

Next, we describe the shrinking horizon execution followed for the plan produced by Algorithm \ref{alg:planning}.

\subsection{Shrinking horizon execution}

\begin{figure*}[t]
\centering
\includegraphics[width=0.7\textwidth]{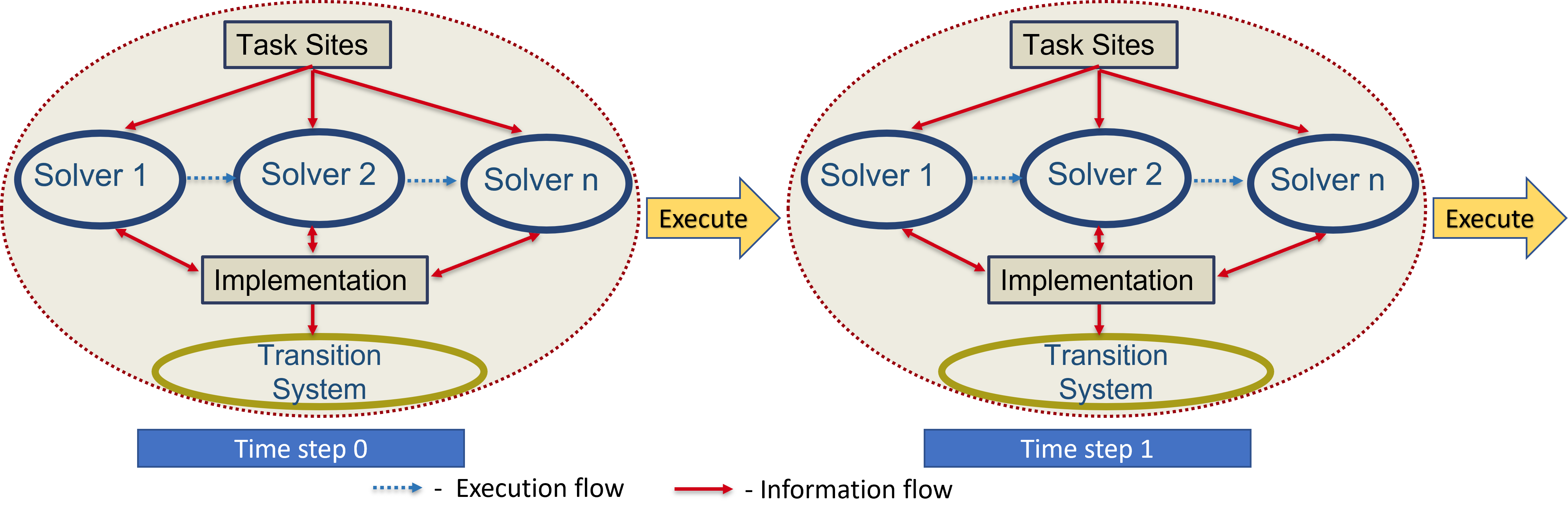}
\caption{Shrinking horizon execution: At each step of execution, a problem space is created to run iterative planning.}
\label{fig:Execution_scheme}
\end{figure*}

After each iterative planning step, the produced plan is executed step by step in a shrinking horizon fashion following the execution scheme shown in Figure \ref{fig:Execution_scheme}. Algorithm \ref{alg:execution} describes the proposed execution scheme. At each planning step, Algorithm \ref{alg:planning} is called to compute a plan for the current specification. Once a plan with a better objective is computed, the plan is executed a time step forward. If a task site $\hat{Y}_j$  was visited at the execution of the current step, the assignment is updated for the next step by excluding $\hat{Y}_j$ from the task site assignment. As Theorem \ref{theorem:feasibility} ensures the recursive feasibility of the current plan, it can always be executed in the system if a better plan does not exist. In the next section, we describe an example implementation of the proposed solution framework.

\begin{algorithm}
\RestyleAlgo{ruled}
\caption{Shrinking Horizon Execution}\label{alg:execution}
\KwData{$S$-Infinite state transition system, $\Xi_0$ - Task site assignment, $x_0$ - Initial state, $k\in(1,\dotsc,K)$ - Execution time steps, $t^{total}_{compute}(<\gamma_{step})$- Max total compute time for each step,  $\gamma_{step}$ - Duration of a time step.}
\KwResult{$\zeta_{final}$- Executed implementation}
\For{$k=1,\dotsc,K$}{
Create \textit{Iterative Planning} instance $k$ using system $S$ and task site assignment $\Xi_{k-1}$\;
Set initial state $x_{k-1}$\;
Run \textit{Iterative Planning} for $\Xi_{k-1}$ on system $S$ starting from $x_{k-1}$ for a horizon of $\gamma_{H}=(K-k+1)\times \gamma_{step}$ with compute time $t^{total}_{compute}$\;
Get $\zeta^*$ from \textit{Iterative Planning} output\;
\If{$k= 1 \ \lor \ f_S(\zeta^*) \leq f_S(\zeta_{k-1})$}{
$\zeta_{k} \gets \zeta^* $\;}
\Else{
$\zeta_{k} \gets \zeta_{k-1}$\;
}
Set $x_{k}=\zeta_{k}(\gamma_{step})$ after execution for $\gamma_{step}$ duration \;
Let $\hat{Y}_j$  for $j\in \{1,\dotsc,j_{k-1}\}$ be task sites of $\Xi_{k-1}$\;
Get satisfied task sites,\newline 
$\Phi=\{\hat{Y}_j \in \Xi_{k-1}\ |\ \exists \gamma \leq \gamma_{step} \text{ s.t } H(\zeta_{k}(\gamma))\in \hat{Y}_j\}$\;
Shift $ \zeta_{k}$ by $\gamma_{step}$, i.e., $\zeta_{k}(\gamma) \gets \zeta_{k}(\gamma - \gamma_{step})$\;
Exclude satisfied task sites and update assignment, $ \Xi_k= \Xi_{k-1} \backslash \Phi$\;
}
\Return $\zeta_{final}=\zeta_{K}$\;
\end{algorithm}



\color{black}
\section{Application: Energy-aware Route Planning}

In this section, we introduce an example planning problem and solve it using the techniques of this paper. Consider a scenario where a collaborative team of vehicles is assigned a set of tasks to be completed at different locations on a known map. The objective is to find the optimum route plan for each vehicle such that the overall plan minimizes the total time to complete all assigned tasks. This problem can be viewed as a VRP, an extension of the TSP. However, complexities arise when the vehicles are of different types, like UAVs or UGVs with different capabilities. Further, when the vehicles have limited energy, the route planner must account for energy consumption and properly handle recharging. Therefore, this example represents a generalized energy-aware cooperative task site assignment.

\color{black}
\subsection{Running Example} \label{sec:example}

\begin{figure}[h]
\centering
\includegraphics[width=0.45\textwidth]{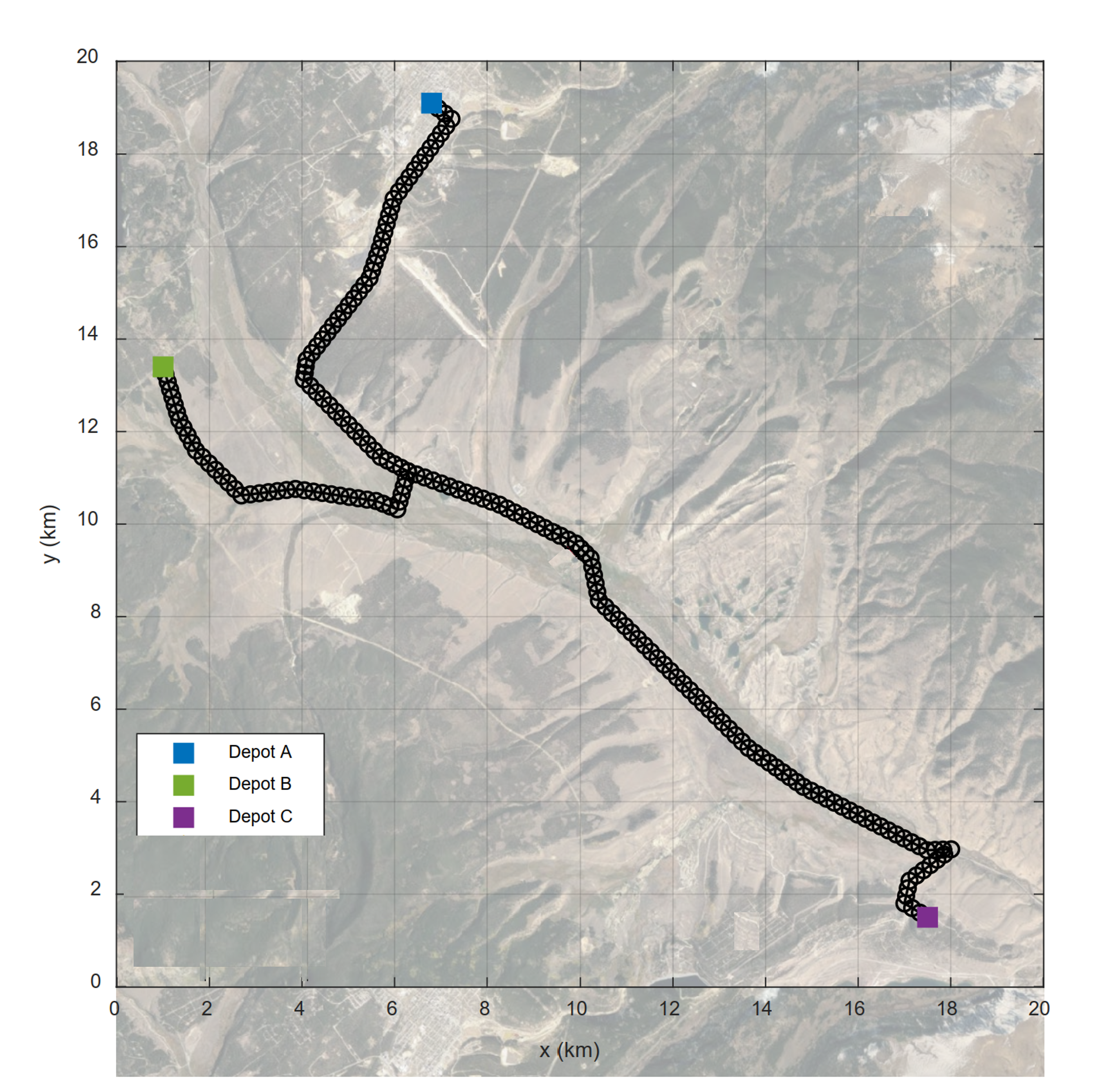}
\caption{Example map.}
\label{fig:map}
\end{figure}


Consider a known 2D map of an area of interest, $G=[0,20]\times[0,20] $.
Let $G_R \subset G$ denote a road and three depot locations at the ends as in Figure \ref{fig:map}. The task is to monitor the road using a team of UAVs and a UGV, where each vehicle has the required perception abilities. Also, the vehicles have limited energy and can be recharged if required. The depots, \textit{A, B} and \textit{C}, are equipped with charging pads for wireless charging of UAVs and a battery-swapping facility for the UGV. Further, the UGV is also equipped with a charging pad to conduct wireless charging of two UAVs at a time. We describe each vehicle in detail as follows.

\subsubsection{UGV specifications}


The motion of the UGV is restricted to the road $G_R$ denoted in black in Figure \ref{fig:map}. It has a total usable energy of $B^{\mathrm{UGV}}=25.01 \,\mathrm{MJ}$. The maximum velocity, $v_{g_{\mathrm{max}}} = 4.5 \,\mathrm{ms}^{-1}$. The power consumption curve of the UGV in Watts is as follows,
\begin{equation}\label{eq:gndpower}
      Pe_g(v_g)=
\begin{cases}
  200\mathrm{W}, \qquad \qquad \qquad  \qquad v_g=0,\\
  1.05(464.8V_g +356.3)\mathrm{W}, \ \ v_g>0.\\
\end{cases}
\end{equation}

\subsubsection{UAV specifications}

The UAVs can travel freely within the map $G$. They have limited energy storage onboard with a total of $B^{\mathrm{UAV}}=287.7\ \mathrm{kJ}$ usable energy. The maximum UAV velocity is $v_{a_{\mathrm{max}}}=16\ \mathrm{ms}^{-1}$. The power consumption curve of a UAV in Watts is as follows,
\begin{equation}\label{eq:airpower}
      Pe_a(v_a)=1.05(0.0461v_a^3-0.5834v_a^2- 1.8761v_a+ 229.6)\mathrm{W}.
\end{equation}

UAVs can dock for recharging on a UGV or a charging pad at a depot. The wireless power transfer rate in Watts when recharging follows the following curve, which depends on the UAV battery's current energy level ($E_a$ in kJ).

\begin{equation}\label{eq:aircharge}
      Pe_{\mathrm{chg}}(E_a)=
\begin{cases}
  310.8, \qquad \qquad \qquad  \qquad E_a \leq 270.4,\\
  17.965(287.7-E_a), \ 270.4 \leq E_a \leq 287.7\\
\end{cases}
\end{equation}

We use the example map and the vehicle specifications as the selected application in this work. Each UAV can travel to any point in the 2D map $G$, and each UGV travels only along the road $G_R$. Further, the energy is consumed or replenished following the consumption and charging curves. The continuous states of spatial, energy, and time increase the complexity of the planning problem. When the number of vehicles increases, the number of state and search space dimensions increases. Planning problems like the running example involve searching for solutions in multi-dimensional infinite-state systems. In this work, we apply the defined concepts of transition systems to mathematically represent the infinite state systems for the motion planning task. In the next section, we formulate the routing problem mathematically using defined preliminaries. Further, we analyze the results of the proposed solution when the problem is scaled in terms of the number of agents and the number of task sites.

\color{black}

\subsection{Transition System Formulation}\label{sec:origsys}

In this section, we mathematically formulate the transition system for the route planning problem in the running example scenario using the established theoretical framework. Let the team consist of $J$ vehicles. We denote the 2D spatial coordinate of the $j^{th}$ vehicle as $(p^j_{x},p^j_{y})\in G $, and the energy level as $e_j\in [0,B^j_{max}]$ where $B^j_{max}$ is the battery capacity of vehicle $j$. Further, we define status flag $f_j$ encoding the vehicle type (UGV/UAV) and its status. The value of $f_j$ is assigned as shown in Table \ref{table:f_i}.

\begin{table}[h!]
    \caption{Values of the status flag $f_j$}
\begin{center}
\begin{tabular}{ | m{1em} | m{2cm}| m{3cm} | } 
  \hline
  $f_j$ & Vehicle type & Status \\ 
  \hline
  0 &  UGV & No UAVs docked \\ 
  \hline
  1 & UGV & 1 UAV docked \\ 
  \hline
    2 &  UGV & 2 UAVs docked \\ 
  \hline
    3 &  UAV & Not docked \\ 
  \hline
    4 &  UAV & Docked \\ 
  \hline
\end{tabular}
\end{center}
\label{table:f_i}
\end{table}

Then, we define the state of the multi-vehicle system at time $\gamma$ as follows.
\begin{equation}\label{eq:infinite_system}
    x=[x_1, x_2,..., x_J,\gamma]\in X,
\end{equation}
where $x_j=[p^j_{x},p^j_{y},e_j, f_j]$ for $j=1,2,...,J$ and $X$ is the state space. We define a transition system for using the state and transitions representing state evolution as follows. 

\begin{definition}
[Multi-vehicle System]   \label{def:fulsys}
Given a multi-vehicle system of $J$ vehicles whose state is denoted as $x_j=[p^j_{x},p^j_{y},e_j, f_j] $ for $j=1,2,...,J$, a transition system $S=(X,T,L,Y,H)$ is defined as follows.\newline
(1) The state space, $X=\{x\ |\  x=[x_1, x_2,..., x_J,\gamma] \}$ where, $x_j=[p^j_{x},p^j_{y},e_j, f_j] $ for $j=1,2,...,J$  and $\gamma$ is the time.\newline
(2) The set of transitions, $T=\{t\ |\ \exists \delta_\gamma > 0 \text{, s.t. } t=(x,\delta_\gamma,x^\prime) \text{ for } x,x^\prime \in X\}.$\newline
(3) The set of labels, $L=\{\delta_\gamma > 0 \ |\ \exists t \in T \text{, s.t. } \  t=(x,\delta_\gamma,x^\prime) \text{ for } x,x^\prime \in X \}.$\newline
(4) Output space,  $Y=\{y\ |\  y=[y_1, y_2,..., y_J] \}$ where, $y_j=[p^j_{x},p^j_{y}] $ for $j=1,2,...,J$.\newline
(5) Output map, $H\colon X \to Y$.
\end{definition}

Here, any transition $t=(x,\delta_\gamma,x^\prime) \in T$ is defined as an evolution of state from any time $\gamma$ to $\gamma+\delta_\gamma$, where vehicles can change spatial coordinates, energy levels, and status, flags $f_j$, under their dynamics and power consumption curves. Each transition is assigned a label $\delta_\gamma \in L$. We set the output $y=H(x)=[y_1, y_2,..., y_J]$, where each $y_j=[p^j_{x},p^j_{y}]$, which contains the spatial coordinates of the $j^{th}$ agent of the system. Now, the transition system for the infinite state system is fully defined as $S=(X,T,L,Y,H)$. Next, we discuss some required properties of the defined system.

\subsection{Ordering Relation and Monotonicity}

For a system $S=(X,T,L,Y,H)$, we introduce the partial ordering relation $\succeq$ as a measure of a state's goodness for comparison as follows. 
\begin{definition}[Partial Ordering]\label{def:partialordering}
    For a transition system $S=(X,T,L,Y,H)$ defined from a multi-vehicle system of $J$ vehicles following \ref{def:fulsys}, let any state be defined as,  $x=[x_1, x_2,..., x_J,\gamma]\in X$ where, $x_j=[p^j_{x},p^j_{y},e_j, f_j] $ for $j=1,2,...,J$  and $\gamma$ is the time. A binary relation $\succeq$ that is reflexive, anti-symmetric, and transitive is called partial ordering relation defined on $X$, if it satisfies the following. \newline
    (1) For any $x,x^\prime \in X$ such that $x^\prime \succeq x$, $H(x)=H(x^\prime)$ must hold. This ensures that the spatial coordinates of all agents in $x$ and $x^\prime$ are equal when comparing the energy and time dimensions.
    \newline
    (2) For any states $x,x^\prime \in X$, let $e_j, e^{\prime}_j  $ denote energy levels of any vehicle $j\in \{1,\dotsc,J\}$, respectively. If $x^\prime \succeq x$ then $e^{\prime}_j \geq e_j$ for all $j \in \{1,\dotsc,J\}$ \newline
    (3) For any $x,x^\prime \in X$, let $\gamma, \gamma^{\prime}$  denote time, respectively. If $x^\prime \succeq x$, then  $\gamma^{\prime} \leq \gamma$.\newline
\end{definition}

Given the partial ordering, the system states can now be ranked by measuring their goodness using the energy level of the vehicles and the time. Further, with the partial ordering defined, the monotonicity of $S$ follows from the Definition \ref{def:monotonesys}. 

\begin{corollary}\label{Cor:monotone}
A transition system $S=(X,T,L,Y,H)$ defined from a multi-vehicle system of $J$ vehicles following Definition \ref{def:fulsys} is monotone with respect to a partial ordering $\succeq$ as defined in Definition \ref{def:partialordering}.
\end{corollary}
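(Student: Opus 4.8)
The plan is to verify the monotonicity condition of Definition~\ref{def:monotonesys} directly for the system $S$ of Definition~\ref{def:fulsys} under the order of Definition~\ref{def:partialordering}. I would fix an arbitrary transition $(x_{1},\ell,x_{1}^{\prime})\in T$ and an arbitrary $x_{2}\in X$ with $x_{2}\succeq x_{1}$, and then exhibit $(x_{2},\ell,x_{1}^{\prime})\in T$. Since a transition of $S$ is a physically admissible evolution of the $J$ vehicles over a duration $\ell$, the argument reduces to showing that the maneuver realizing the original transition stays admissible when started from the dominating state $x_{2}$ rather than $x_{1}$, keeping the same duration and the same endpoint.

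I would run this verification one coordinate block at a time. By condition~(1) of Definition~\ref{def:partialordering}, $x_{2}\succeq x_{1}$ forces $H(x_{2})=H(x_{1})$, so every vehicle occupies the same coordinates $(p^{j}_{x},p^{j}_{y})$ in $x_{1}$ and $x_{2}$. Hence the spatial displacement needed to reach the positions prescribed by $x_{1}^{\prime}$ over the same $\ell$ is identical and respects the bounds $v_{g_{\mathrm{max}}},v_{a_{\mathrm{max}}}$ exactly as the original transition did; the status flags $f_{j}$, which only record vehicle type and docking along the path, are unchanged because the path and its timing are unchanged.

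The substantive step is the energy block, which I expect to carry the weight of the proof. By condition~(2), $e^{(2)}_{j}\ge e^{(1)}_{j}$ for every $j$. Along the common maneuver the instantaneous power is governed by $Pe_{g}$, $Pe_{a}$ and $Pe_{\mathrm{chg}}$ of \eqref{eq:gndpower}, \eqref{eq:airpower} and \eqref{eq:aircharge}, which depend only on velocity, position and docking status -- all coinciding for the two starting states. So the energy profile over the transition has the same shape, merely offset upward by the nonnegative slack $e^{(2)}_{j}-e^{(1)}_{j}$; if the original profile never left $[0,B^{j}_{\max}]$, neither does the offset one, and the endpoint charge of $x_{1}^{\prime}$ stays attainable. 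The point to isolate is that extra initial energy only relaxes feasibility, never tightens it.

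The main obstacle I anticipate is the bookkeeping that reconciles the fixed endpoint $x_{1}^{\prime}$ with the strict part of the order. A strictly earlier clock (condition~(3)) or strictly larger initial charge (condition~(2)) appears to displace the natural endpoint of the maneuver away from $x_{1}^{\prime}$. For time, I would note that every transition rigidly advances the clock by its label, $\gamma^{\prime}=\gamma+\ell$, so condition~(3) contributes trivially once the clocks are aligned and the endpoint time is consistent by construction. For energy, the clean reading is to let the successor dominate $x_{1}^{\prime}$ rather than equal it, i.e.\ to interpret monotonicity through the trajectory-level implication recorded just after Definition~\ref{def:monotonesys}, where a better initial state yields a trajectory whose every state dominates the original. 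I expect this alignment between the pointwise statement and its trajectory-level consequence to be the only delicate point; once settled, the coordinate-wise feasibility argument closes the proof.
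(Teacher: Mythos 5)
Your proposal follows essentially the same route as the paper's proof: both verify the condition of Definition \ref{def:monotonesys} directly by arguing that, since the positions in $x_1$ and $x_2$ coincide (condition (1) of Definition \ref{def:partialordering}) and $x_2$ differs only by having more energy and an earlier clock, the maneuver realizing $(x_1,\ell,x_1')$ remains admissible from $x_2$. The paper's proof is a three-sentence version of this (``$x_1'$ is reachable from $x_1$ in time $\delta_\gamma$, hence reachable from any state with higher energy or earlier time''), so you have reconstructed it, and with more care about which coordinate blocks actually carry the argument.

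The substantive difference is the ``bookkeeping'' obstacle you flag at the end, and you are right that it is the delicate point --- but your resolution does not close it. Definition \ref{def:monotonesys} requires the transition from $x_2$ to land \emph{exactly} on $x_1'$, not on a state dominating it. Starting with strictly more charge and running the same power profile ends with strictly more charge than $x_1'$ prescribes, and starting with a strictly earlier clock $\gamma_2<\gamma_1$ makes $(x_2,\ell,x_1')$ impossible outright, since a transition with label $\ell$ advances the clock to $\gamma_2+\ell\neq\gamma_1+\ell$ while $x_1'$ carries the time stamp $\gamma_1+\ell$. Your proposed reading --- let the successor dominate $x_1'$ rather than equal it, i.e.\ prove the trajectory-level implication recorded after Definition \ref{def:monotonesys} --- establishes a weaker statement than the corollary as written. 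To be fair, the paper's own proof simply asserts reachability of $x_1'$ from $x_2$ without confronting this at all, so you have surfaced a gap that the published argument shares rather than introduced a new one; but as a proof of Corollary \ref{Cor:monotone} under the literal Definition \ref{def:monotonesys}, your attempt (like the paper's) is incomplete at exactly the point you isolate.
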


\begin{proof}
Let any $x_{1},x_{1}^{\prime}\in X$ and $\delta_\gamma \in L$, such that a transition $t=(x_{1},\delta_\gamma ,x_{1}^{\prime}) \in T$ exists. Let $x_{2}\in X$ such that $x_{2}\succeq
x_{1}$, that is $x_{2}$ ranks better than $x_{1}$ due to having higher energy level or earlier time. Because it is given that a transition $t=(x_{1},\delta_\gamma ,x_{1}^{\prime})$ exists,  $x_{1}^{\prime}$  is reachable from $x_{1}$ in time $\delta_\gamma$. Therefore, $x_{1}^{\prime}$  is reachable from a state with higher energy or earlier time than $x_{1}$, like $x_{2}\in X$. Thus, for any $x_{2}\in X$ such that $x_{2}\succeq x_{1}$, a transition $(x_{2},\ell,x_{1}^{\prime})\in T$ always exists. Therefore, the transition system $S$  defined from a multi-vehicle system following Definition \ref{def:fulsys} is monotone.
\end{proof}

\subsection{Task Site Assignment for Monitoring Tasks}

In the running example, the goal is to monitor the road map $G_R$. Each road location forms classes in the output space $Y$ to create the task site assignment. 

\begin{definition}[Task Site Assignment for Monitoring Tasks]
    For a transition system $S=(X,T,L,Y,H)$ defined from a multi-vehicle system of $J$ vehicles following Definition \ref{def:fulsys}, and a road $G_R$ of $N_R$ locations, each location $\hat{y_i}=(p^i_{x},p^i_{y}) \in G_R $ forms an output class $\hat{Y}_i$ in the task site assignment such that $\hat{Y}_i=\{y\ |\ \exists y \in Y \text{ s.t. } \hat{y_i} \in y \text{ where } \hat{y_i} \in G_R \}$. The collection of all output classes is denoted as the task site assignment. $\Xi=\{\hat{Y}_1,\dotsc, \hat{Y}_{N_R}\}$.
\end{definition}

Thus, output behavior $\xi_\zeta$ of any implementation $\zeta$, such that $\Ima{\xi_\zeta} \cap \hat{Y}_i \neq \emptyset$ for all $i=1,\dotsc,N_R$, satisfies the goal of road monitoring.

\subsection{Planning Problem for Monitoring Tasks}

Each implementation that satisfies the task site assignment is evaluated using the objective function. The goal of the planning task is to find a trajectory that satisfies the task site assignment when implemented while minimizing the total time spent executing the trajectory. Therefore, we define the objective function for monitoring tasks as follows. 

\begin{definition}[Objective Function for Multi-vehicle System]
For a transition system $S=(X,T,L,Y,H)$ defined from a multi-vehicle system of $J$ vehicles following Definition \ref{def:fulsys}, let $\tau
=x_{0}\delta_\gamma^{0}x_{1}\delta_\gamma^{1}\cdots\delta_\gamma^{N-1}x_{N} \in \Sigma(S)$ denote a trajectory and $\zeta_\tau$ be its implementation. An objective function $f_S \colon \zeta_{\Sigma} \to \mathbb{R}$, is defined to return the time of the final state, i.e., $\gamma_N=f_S(\zeta_\tau)$ where $\gamma_N\in x_N$. 
\end{definition}

The goal is to find an implementation $\zeta \colon (0, \gamma) \to X$ of some duration $\gamma \geq 0$, such that the image of its output behavior, $\xi_\zeta \colon (0, \gamma) \to Y$ contains the task sites in the assignment, i.e., $\hat{Y_i} \subset \Ima (\xi_\zeta)$ for $i=1,2,\dotsc,N_R$. The objective function, $f_S \colon \zeta_\Sigma(S) \to \mathbb{R}$, evaluates the plan optimality when implemented in the system $S$. Therefore, we state the planning problem as an optimization problem as follows.

 \begin{equation}\label{eq:planprob_ex}
 \minimize_{\text{ s.t. } \forall i\  \hat{Y{_i}} \subset \Ima (\xi_\zeta)} f_S(\zeta),
 \end{equation}
 
To solve this problem optimally, the best trajectory in a continuous multi-dimensional state space must be found. However, real-time planning scenarios make this search infeasible. Therefore, we define sampled transition systems for each solver space to efficiently use Algorithm \ref{alg:planning} for plan generation and Algorithm \ref{alg:execution} for execution.

 \subsection{Sampled Transition Systems and Specifications for Monitoring Tasks}

Each transition system $S=(X,T,L,Y,H)$ defined from a multi-vehicle system of $J$ vehicles following Definition \ref{def:fulsys}, has a continuous state space. We define sampled transition systems $S_d=(X_d,T_d,L,Y_d,H)$ following the Definition \ref{def:sampled}. From the Monotonicity Theorem \ref{theo:mono}, we have the following result on sampled transition systems. 

\begin{corollary}\label{cor:sampled_mono}
    For a transition system $S=(X,T,L,Y,H)$ defined from a multi-vehicle system of $J$ vehicles following Definition \ref{def:fulsys}, a sampled transition system $S_d=(X_d,T_d,L,Y_d,H)$ of $S$ is monotone.
\end{corollary}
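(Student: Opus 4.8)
The plan is to obtain this result as a direct composition of the two monotonicity facts already established earlier in the paper, so the argument should be short. First I would invoke Corollary \ref{Cor:monotone}, which guarantees that the transition system $S=(X,T,L,Y,H)$ built from the multi-vehicle system of Definition \ref{def:fulsys} is monotone with respect to the partial ordering $\succeq$ introduced in Definition \ref{def:partialordering}. This establishes exactly the hypothesis required by the next step: that $S$ is a monotone transition system for a well-defined partial order on its state space $X$.

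Next I would apply Theorem \ref{theo:mono} (Monotonicity in Sampled Transition System), which states that any sampled transition system $S_d=(X_d,T_d,L,Y_d,H)$ of a monotone transition system $S$ is again monotone with respect to the same partial ordering $\succeq$. Since the first step already certifies that $S$ is monotone, instantiating Theorem \ref{theo:mono} with this particular $S$ and its sampled system $S_d$ yields monotonicity of $S_d$ immediately, completing the proof.

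There is essentially no obstacle in this argument, since it is a pure application of two prior results rather than a fresh derivation. The only point I would take care to state explicitly is that the partial ordering $\succeq$ appearing in Corollary \ref{Cor:monotone} is the same relation with respect to which Theorem \ref{theo:mono} preserves monotonicity; because Theorem \ref{theo:mono} carries over monotonicity with respect to whatever partial order the original system $S$ admits, no additional compatibility or sampling condition is needed, and the conclusion that $S_d$ is monotone follows at once.
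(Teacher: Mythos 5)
Your proposal is correct and matches the paper's own proof exactly: the paper likewise cites Corollary \ref{Cor:monotone} to establish monotonicity of $S$ and then applies Theorem \ref{theo:mono} to conclude that the sampled system $S_d$ is monotone. Your added remark about the partial ordering being the same $\succeq$ throughout is a reasonable point of care but introduces nothing beyond the paper's argument.
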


\begin{proof}
    From Corollary \ref{Cor:monotone}, we have that a transition system $S=(X,T,L,Y,H)$ defined from a multi-vehicle system of $J$ vehicles following Definition \ref{def:fulsys} is monotone. Further, from Theorem \ref{theo:mono}, we have that any transition system sampled from a monotone transition system is monotone. Therefore, a sampled transition system $S_d=(X_d,T_d,L,Y_d,H)$ is monotone. \end{proof}

Given a task site assignment $\Xi$ and a sampled transition system $S_d=(X_d,T_d,L,Y_d,H)$ of $S$, a specification $M \subset \Sigma(S_d)$ can be defined following Definition \ref{def:Specifcation}. For the defined specification $M$, we state the following result on the monotonicity of $M$. 

\begin{corollary}\label{cor:spec_mono}
    Given a transition system $S=(X,T,L,Y,H)$ defined from a multi-vehicle system of $J$ vehicles following Definition \ref{def:fulsys} and a task site assignment $\Xi$, the specification $M \subset \Sigma(S)$ defined from the task site assignment $\Xi$, is monotone.
\end{corollary}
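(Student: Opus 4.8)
The plan is to verify directly the defining condition of a monotone specification (Definition \ref{def:mono_spec}): I take an arbitrary $\tau = x_0\ell_0 x_1\cdots\ell_{N-1}x_N \in M$ and an arbitrary trajectory $\tau' = x_0'\ell_0 x_1'\cdots\ell_{N-1}x_N'$ with $\tau'\succeq_\tau\tau$, so that $x_i'\succeq x_i$ for every $i$ and, since $S_d$ is monotone by Corollary \ref{cor:sampled_mono}, $\tau'$ is a genuine execution trajectory of the sampled system. By Definition \ref{def:Specifcation} showing $\tau'\in M$ reduces to checking that $\tau'$ still meets a state from every key state class and a transition from every key transition class. The single fact that drives the whole argument is condition (1) of the partial ordering (Definition \ref{def:partialordering}): $x'\succeq x$ forces $H(x')=H(x)$, so the partial order only raises energy and advances time while leaving every spatial output, and hence the entire task-site geometry, unchanged.

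First I would dispatch the key state classes. Since $\tau\in M$, for each key state class $\hat{X}_i$ there is an index $k$ with $x_k\in\hat{X}_i$, i.e. $x_k\in X_d$ and $H(x_k)\in\hat{Y}_i$. Because $x_k'\succeq x_k$, condition (1) gives $H(x_k')=H(x_k)\in\hat{Y}_i$, and $x_k'\in X_d$ since $\tau'$ is a trajectory of $S_d$; hence $x_k'\in\hat{X}_i$ and $x_k'\in\tau'$. Thus condition (1) of Definition \ref{def:Specifcation} transfers to $\tau'$ verbatim.

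The main obstacle is the key transition classes, because their membership is phrased through the \emph{existence} of an implementation that reaches the task site strictly between sampled states. For each key transition class $\hat{T}_j$ there is a transition $t_m=(x_m,\ell_m,x_{m+1})\in\tau$ with $t_m\in\hat{T}_j$, witnessed by an implementation $\zeta$ of $x_m\ell_m x_{m+1}$ and a timestamp $\gamma\in(0,\ell_m)$ with $H(\zeta(\gamma))\in\hat{Y}_j$. I would argue that the corresponding transition $t_m'=(x_m',\ell_m,x_{m+1}')$ of $\tau'$ again lies in $\hat{T}_j$ by exhibiting an implementation $\zeta'$ of $t_m'$ that traverses the same spatial path as $\zeta$: the endpoints $x_m',x_{m+1}'$ share their spatial coordinates with $x_m,x_{m+1}$ (condition (1)), carry at least as much energy (condition (2)), and the common label $\ell_m$ fixes identical durations, so following $\zeta$'s velocity profile from $x_m'$ stays energy-feasible by monotonicity of the vehicle dynamics (Corollary \ref{Cor:monotone}) and because the power curves depend only on velocity. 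Since $H$ reads off position alone, $H(\zeta'(\gamma))=H(\zeta(\gamma))\in\hat{Y}_j$, so $t_m'\in\hat{T}_j$ and condition (2) of Definition \ref{def:Specifcation} transfers as well.

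Having shown that $\tau'$ satisfies both conditions of the specification, I conclude $\tau'\in M$, which is exactly the assertion that $M$ is monotone. I expect the delicate point to be the implementation-level step in the transition case, namely making precise that the lifted trajectory can still traverse the same intermediate task site while terminating at the prescribed higher-energy, earlier-time endpoint; I would keep that step clean by leaning on the position-only output map together with the monotonicity already established in Corollaries \ref{Cor:monotone} and \ref{cor:sampled_mono}.
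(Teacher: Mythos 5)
Your proposal is correct and follows essentially the same route as the paper's proof: both hinge on condition (1) of the partial ordering, namely that $x'\succeq x$ forces $H(x')=H(x)$, so the output (spatial) content of the trajectory is unchanged and task-site satisfaction transfers to $\tau'$. The only difference is that you spell out the key-state-class and key-transition-class cases of Definition \ref{def:Specifcation} separately (including the implementation-level witness for key transitions), whereas the paper's proof compresses this into the single assertion that the output behavior of an implementation of $\tau'$ still satisfies $\Xi$.
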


\begin{proof}
    Let $\tau =x_{0}\delta_\gamma^{0}x_{1}\delta_\gamma^{1}\cdots\delta_\gamma^{N-1}x_{N} \in \Sigma(S)$ denote a trajectory in $M$ whose implementation is  $\zeta$. Let $\xi_\zeta$ be the output behavior of $\zeta$. Since, $\tau \in M$,  $\xi_\zeta$ satisfies $\Xi$. Let any trajectory $\tau^{\prime}
=x_{0}^{\prime}\ell_{0}x_{1}^{\prime}\ell_{1}\cdots\ell_{N-1}x_{N}^{\prime}$, where
\[
x_{0}^{\prime}\succeq x_{0},x_{1}^{\prime}\succeq x_{1}\cdots,x_{N}^{\prime
}\succeq x_{N}.%
\]
That is, each $x_{i}^{\prime}$, which has higher energy and earlier time, ranks better than $x_{i}$.
 From the definition of $\succeq$ we have $H(x_{i}^{\prime})=H(x_{i})$ for each $i$. Therefore, output behavior $\xi^\prime$ of an implementation of $\tau^\prime$ satisfies the task site assignment $\Xi$. Thus, $\tau^\prime$ is in $M$. Hence, the specification $M$ is monotone.
\end{proof}

From Corollary \ref{cor:sampled_mono} and Corollary \ref{cor:spec_mono}, we state that improvements for states are possible in the sampled transition systems defined from multi-vehicle systems, and when such improvements yield improved trajectories, they will still satisfy the intended specification. Therefore, we have established the possibility of improvement for a specification-satisfying plan.

Using the sampled transition systems $S_d$ and the defined specification $M\in \Sigma(S_d)$, we can restate the optimization problem in Equation \ref{eq:planprob_ex} as follows.

 \begin{equation}\label{eq:planprob_ex_sampled}
 \minimize_{\tau \in M} f_S(\zeta_\tau),
 \end{equation}

 From Theorem \ref{theo:task_sat}, we have that any $\tau \in M$, which is a solution to the optimization problem in Equation \ref{eq:planprob_ex_sampled}, yields an output behavior that satisfies the task site assignment from which $M$ was defined. Further, if the sampling for the creation of $S_d$ was conducted such that recursive feasibility conditions of Theorem \ref{theorem:feasibility} are satisfied, it ensures a specification-satisfying trajectory always exists in the sampled system.
 
 Now, we have completely defined a planning problem in a multi-vehicle system using the mathematical concepts of transition systems. We have also proved the necessary conditions for using Algorithm \ref{alg:planning} to iteratively employ a set of solvers to find a plan. Next, we present formulation of the solvers used in Algorithm \ref{alg:planning} for the monitoring task.

\subsection{Solver Formulation}


In this section, we introduce the solvers used in the iterative planning framework and the respective sampled transition systems for these optimization solvers as follows for a given transition system for a multi-vehicle system $S$ and a monitoring tasks site assignment $\Xi$. 
\begin{enumerate}
    \item Team-level discrete optimization
    
    This solver operates in a sampled transition system $S_{d_1}$ of the original system $S$ and provides a feasible plan $\tau_{d_1} \in M_1$ for the whole team that satisfies the specification $M_1$ defined from $\Xi$.
    
    \item Agent-level optimization
    
    This solver operates in another sampled transition system $S_{d_2}$ and attempts to find a solution $\tau_{d_2}\in M_2$ such that it is better than the previous solution, i.e., $f_S(\zeta_{d_1})\geq f_S(\zeta_{d_2}) $ where $\zeta_{d_1},\zeta_{d_2}$ are the respective implementations of $\tau_{d_1}$ and $\tau_{d_2}$. Here, $M_2$ is the specification in $S_{d_2}$ defined from $\Xi$. This instance of the solver operates at each agent level and optimizes the motion trajectory assigned by the team-level plan. Therefore, it operates in the same discrete spacial points. However, the energy and time dimensions are continuous.
    
\end{enumerate}

Next, we look at the formulation of the previously described optimization solvers and their operating spaces.

\subsection{Team-level Discrete Optimization (TDO)}

The goal of the TDO is to generate a feasible plan that satisfies the specification for the whole team using minimal computation resources and time. Therefore, we employ a sampling procedure to discretize the state space and time to formulate sampled transition system.

\begin{figure}[h]
\centering
\includegraphics[width=0.35\textwidth]{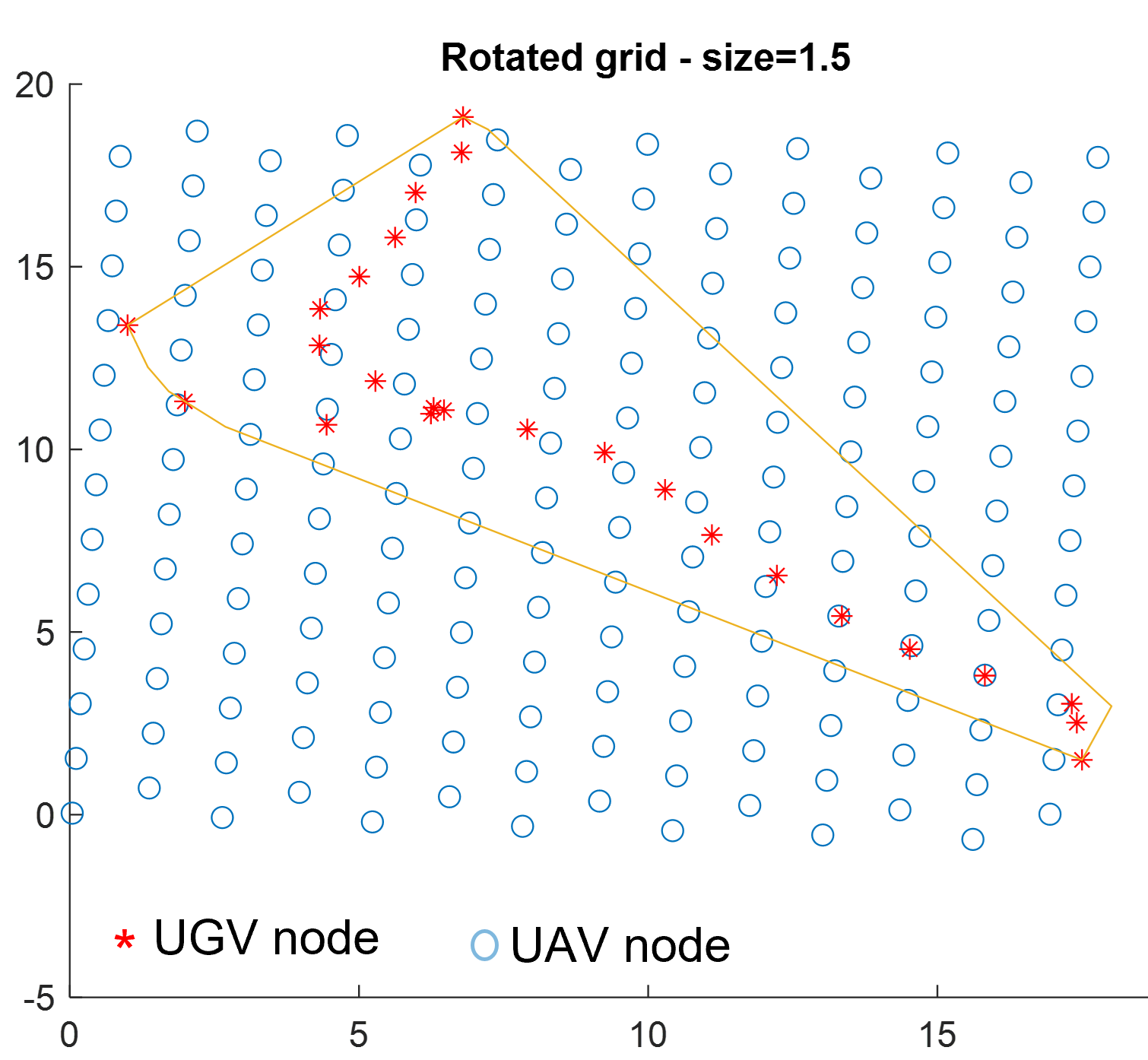}
\caption{Discretized example map. The region of operation is the spatial coordinates inside the convex hull.}
\label{fig:discretemap}
\end{figure}

\subsubsection{Time discretization}

The time discretization step $\gamma_d$ is chosen to set the resolution of the planning horizon. This can be chosen arbitrarily to match the specifications of the problem. In the running example, we set $\gamma_d=5\ min$.

\subsubsection{Spatial discretization}

The spatial dimension consists of all possible points the vehicles can travel in the known map $G$. However, the UGVs are restricted to the road denoted as $G_R$. As the time step $\gamma_d=5\ min$, we fix the velocity of the vehicles and evaluate the distance traveled by each vehicle in duration $\gamma_d$. For the UGVs, we discretize the road to a set of points $P_g=\{p_g |\ (p^x_g,p_g^{y})\in G_R \}$, which are $1.2\ km$ apart. The UAVs are capable of traveling between any points within the map $G$. Thus, we overlay a triangular grid of size $1.5km$ on the map to sample a set of points in the 2D space. We orient the grid such that maximum overlap is achieved with $P_g$ to remove redundant points. Then, we crop the grid to include the points within the convex hull of the region of operation to further reduce the state space size to set  $P_a=\{p_a |\ (p^x_a,p_a^{y}) \in G \}\cup P_g$. Here, $P_g \subset P_a$ allows UAVs to rendezvous with UGVs. The final spatial discretization is shown in Figure. \ref{fig:discretemap}. 

\subsubsection{Energy dimension discretization}
The transition of vehicles within the spatial nodes depletes their energy following the consumption curves in Eq. \eqref{eq:gndpower} and \eqref{eq:airpower}. The recharging of the UAV batteries either on a depot or on a ground vehicle replenishes the UAV batteries following the curve in Eq. \eqref{eq:aircharge}. Further, a UGV battery depletes following the same curve when recharging UAVs. However, from the time discretization, it is known that each depletion and recharge task is continued for a $\gamma_d$ interval per step. We calculate energy depletion per moving step for each vehicle assuming constant velocity for a $\gamma_d$ duration as $Pe_a(v_a)\times \gamma_d $ for UAVs and $Pe_g(v_g)\times \gamma_d $ for UGVs. Let the number of maximum energy levels of UAV be $B^{\mathrm{max}}_a$ and that of UGV be $B^{\mathrm{max}}_g$. We find the number of energy levels depleted as, $$B_{\mathrm{move}}^a=\lceil{\frac{B^{\mathrm{UAV}}}{B^{\mathrm{max}}_a}\times Pe_a(v_a)\gamma_d}\rceil$$ for the UAV, and $$B_{\mathrm{move}}^g=\lceil{\frac{B^{\mathrm{UGV}}}{B^{\mathrm{max}}_g}\times Pe_g(v_g)\gamma_d}\rceil$$ for the UGV. 
Then, we find the energy levels gained per charging step of $\gamma_d$ duration for the UAVs as $$B_{\mathrm{charge}}^a=\lfloor{\frac{B^{\mathrm{UAV}}}{B^{\mathrm{max}}_a}\times Pe_\mathrm{chg} \gamma_d}\rfloor.$$ At each charging step on the ground vehicle, $B_{\mathrm{charge}}^a$ is depleted from the UGV energy. We always over-approximate energy depletion through \textit{ceil} operation and under-approximate the energy gain by \textit{floor} operation.

Given the above discrete formulation of the state space, we define the optimization problem as an SMT problem to search for a feasible solution. This enables computing a feasible solution faster, as searching for the optimal solution is NP-hard. To this end, we define the set of constraints for the SMT problem that encodes the vehicle motion dynamics, energy dynamics, and specification satisfaction. We look at the formulation of such a constraint set for a team of one UGV and two UAVs.

Let a $K$ be the maximum length of the allowed time horizon. For each $k=1,\dotsc,K$ we define the following variables to formulate the SMT problem.

\begin{itemize}
    \item[1.] $p^i_a({k})\in P_a$- Position of UAV $i$ at time step $k$.
    \item[2.] $p_g({k})\in P_g$- Position of UGV at time step $k$.
    \item[3.] $b^i_a({k})\in \{0,\dotsc,B_a^{\mathrm{max}}\}$- Battery level of UAV $i$ at time step $k$.
    \item[4.] $b_g({k}) \in \{0,\dotsc,B_g^{\mathrm{max}} \}$- Battery level of UGV at time step $k$.
    \item[5.] $v^i_a({k})\in \{True,False\}$- Boolean variable denoting status of motion of UAV $i$ at time step $k$.
    \item[6.] $v_g({k})\in \{True,False\}$- Boolean variable denoting status of motion of UGV at time step $k$.
    \item[7.] $s^i(k) \in \{True,False\}$- Boolean variable denoting charging status of UAV $i$ on the UGV at time step $k$.
\end{itemize}

Using the defined variables, we state the following constraints on the motion of the vehicles. For the ground vehicle, 

\begin{equation}\label{eq:cons10}
\begin{split}
     \forall p,\  \forall k\;\ \   (p_g(k)=p) \land  &v_G(k) \\&\implies  \bigvee_{q \in P_{g}^{\mathrm{next}}(p)} p_g(k+1)=q.\\
\end{split}
\end{equation}
Here, $P_{g}^{\mathrm{next}}(p) \subset P_g$ denotes the next possible positions for a UGV currently at $p$. Similarly, we state the dynamical constraint for the $i^{th}$ UAV as, 

\begin{equation}
\begin{split}
     \forall p,\  \forall k\;\ \   (p^i_a(k)=p) &\land v^i_a(k) \land   \lnot(s^i(k)  \land s^i(k+1))\\& \implies \bigvee_{q \in P_{a}^{\mathrm{next}}(p^i)}  p^i_a(k+1)=q .
\end{split}
\end{equation}
Here, $P_{a}^{\mathrm{next}}(p) \subset P_a$ denotes the next possible positions for a UAV $i$ currently at $p$. Further, the UAV motion is described by the following constraints accounting for charging on UGV.

\begin{equation}
\begin{split}
 \lnot v^i_a(k) \land \lnot (s^i(k) \land s^i(k+1&))\\ &\implies p^i_a ({k+1 })=p^i_a ({k}),\\
     s^i(k) &\implies p^i_a(k)=p_g (k).
\end{split}
\end{equation}

Then, we state the constraints on the energy dynamics. UGV energy at time step $k$, depends on its velocity $v_g(k)$ and the status of UAV charging on UGV at $k$. We state the UGV energy dynamics as follows for the two UAV systems in focus.
\begin{equation}
\begin{split}
 \lnot v_g(k) \ \land & (s^1(k) \land s^2(k))\implies\\  & b_g ({k+1 })=b_g (k) -2 B^a_{\mathrm{charge}},\\
 \end{split}
\end{equation}
 \begin{equation}
\begin{split}
  \lnot v_g(k) \ \land & (s^1(k) \veebar s^2(k))\implies \\  &b_g ({k+1 })=b_g (k) -B^a_{\mathrm{charge}},\\
   \end{split}
\end{equation}
 \begin{equation}
\begin{split}
   \lnot v_g(k) \ \land & (\lnot s^1(k) \land \lnot s^2(k)) \implies  \\ & b_g ({k+1 })=b_g (k),\\
    \end{split}
\end{equation}
 \begin{equation}
\begin{split}
   v_g(k) \ \land &(s^1(k) \land s^2(k))  \implies\\ &  b_g  ({k+1 })=b_g  (k) -2 B^a_{\mathrm{charge}}-B^g_{\mathrm{move}},\\
    \end{split}
\end{equation}
 \begin{equation}
\begin{split}
  v_g(k) \ \land &(s^1(k) \veebar s^2(k))  \implies \\ &  b_g  ({k+1 })=b_g  (k) -B^a_{\mathrm{charge}}-B^g_{\mathrm{move}},\\
   \end{split}
\end{equation}
 \begin{equation}
\begin{split}
   v_g(k) \ \land &(\lnot s^1(k) \land \lnot s^2(k))  \implies\\ &  b_g  ({k+1 })=b_g  (k)-B^g_{\mathrm{move}}.
\end{split}
\end{equation}

Similarly, we state the constraints for energy dynamics for the UAV $i$ as follows. Let $Dp^i$ be a Boolean variable denoting whether the UAV $i$ is at a depot location. Here, $p_{\mathrm{depot_1}},p_{\mathrm{depot_2}},p_{\mathrm{depot_3}}$ denotes the discrete nodes corresponding to depot locations.

\begin{equation}
\begin{split}
     Dp^i&=(p^i_a(k)=p_{\mathrm{depot_1}} \lor p^i_a(k)=p_{\mathrm{depot_2}}\\ & \lor p^i_a(k)=p_{\mathrm{depot_3}}).
\end{split}
\end{equation}

The following constraints state the energy dynamics of the UAV when moving, charging on a UGV, or charging at a depot location.
 \begin{equation}
\begin{split}
 \lnot s^i(k) \land \lnot v^i_a(k) \land \lnot Dp^i &\implies b^i_a ({k+1 })=b^i_a (k),\\
    \end{split}
\end{equation}
 \begin{equation}
\begin{split}
  \lnot s^i(k) \land v^i_a(k) \land &\lnot Dp^i \implies\\& b^i_a ({k+1 })=b^i_a (k)-B^a_{\mathrm{move}},\\
     \end{split}
\end{equation}
 \begin{equation}
\begin{split}
  ( s^i(k) \lor Dp^i)\land \lnot &v^i_a(k)  \implies \\&b^i_a ({k+1 })=b^i_a (k)+B^a_{\mathrm{charge}},\\
     \end{split}
\end{equation}
 \begin{equation}
\begin{split}
  ( s^i(k) \lor Dp^i)\land &v^i_a(k)  \implies \\&b^i_a ({k+1 })=b^i_a(k)+B^a_{\mathrm{charge}} - B^a_{\mathrm{move}}.
   \end{split}
\end{equation}

From the discretization of the spatial dimension, the original task of visiting $G_R$ reduces to a specification of visiting all discrete positions in $P_g$. Thus, we state our monitoring task as a constraint as follows.

\begin{equation}\label{eq:cons24}
    \forall p \in P_g \;\ \  \bigvee_{k=0}^K \big( p_g(k)=p \lor \bigvee_{\forall i} p^i_{a}(k)=p \big) =True.
\end{equation}

This constraint is satisfied for each location in $P_g$, when it is visited by any vehicle at any time step before $K$. Therefore, its satisfaction ensures the satisfaction of the task site assignment. Now, any solution $\tau$ that satisfies the previously defined set of constraints is a feasible solution to the Team-level discrete optimization problem. We employ $Z3$, an SMT solver, to search for a feasible solution faster.

The feasible solution that we gain after TDO is sub-optimal due to three kinds of sub-optimalities introduced as follows. 

\begin{itemize}
    \item[1.] Sub-optimality of the SMT solver. 
    
    The SMT solver only searches for constraint satisfaction. Therefore, the produced solution is not optimal.
     \item[2.] Sub-optimality due to discretization. 
     
    The state space is a discrete space generated through the sampling process. Each state transition takes place in an interval $\gamma_d$. Therefore, the solution trajectories could be sub-optimal when compared to continuous time, and continuous state trajectories.
    \item[3.] Sub-optimality due to conservativeness of the sampled system. 
    
    The sampled system is generated by assuming a constant velocity of the vehicles. However, when a generated plan is executed in the original infinite state system, a suitable optimal velocity can be selected by the lower-level control algorithm. Further, due to the over-approximations of the energy depletion and under-approximation of energy gained while charging, the sampled system is more conservative than the original infinite state system. This introduces further sub-optimality to the TDO solution.
   
\end{itemize}

TDO is used to generate a feasible solution quickly. Thus, the remainder of allocated computation resources and time can be exploited toward reducing the sub-optimality of the solution. To this end, we use agent-level optimization as the next solver in the iterative planning framework, where we further optimize the solution generated by TDO. 

\subsection{Agent-level  Optimization (AO)}

We identify that the TDO solution $\tau_{d_1}$ can be optimized at each agent level in between events involving multiple agents. From the TDO solution, we make the following observations. 

\begin{itemize}
    \item[1.] The traversal order of the visits to task site coordinates can be modified.
    \item[2.] All charging events can be dropped or reordered, as long as the energy dynamic constraints are not violated.
    \item[3.] Charging on the UGV is equivalent to charging at depots. However, they are only active at a time step where the UGV is at that specific spatial coordinate.
\end{itemize}

Based on the previous observations, improvements are possible to the AO solution $\tau_{d_2}$, such that  $f_S(\zeta_{d_1})\geq f_S(\zeta_{d_2})$, where $\zeta_{d_1},\zeta_{d_2}$ are respective implementations of $\tau_{d_1}$ and $\tau_{d_2}$. For each UAV, we formulate a vehicle routing optimization problem as an SMT problem. Therefore, we operate within a set of discrete coordinates in the spatial dimension while allowing the time and energy dimensions to be continuous. To guarantee the recursive feasibility of $\zeta_{d_1}$ at AO, the conditions of Theorem \ref{theorem:feasibility} must be satisfied. For some set of time stamps $\Gamma=\{\gamma_0,\dotsc,\gamma_\mu\}$, let $\Tilde{X}= \{x\ | \  x=\zeta_{d_1}(\gamma)\ \exists \gamma\in \Gamma \} $. Since time and energy dimensions are continuous, we only need to ensure that UAV can reach $\Tilde{X}$ projected to the spatial dimension. Let $\Tilde{Y}= \{y\ | \ \exists x\in \Tilde{X}\  y \in H(x) \} $.   We define the following variables for each UAV.

\begin{itemize}
    \item[1.] $N^v=\{N^v_1,\dotsc,N^v_n\}$- The set of coordinates to be visited to satisfy the goal.
    \item[2.] $N^d=\{N^{d}_1,\dotsc,N^{d}_r\}$- The set of coordinates of depot locations where charging is possible.
    \item[3.] $N^g\{=\{N^{g}_1,\dotsc,N^{g}_q\}$- The set of coordinates of locations where charging is possible on the UGV.
    \item[4.] $\gamma^{g}_1,\dotsc,\gamma^{g}_q$- The set of timestamps where each UGV charging location is active.
    \item[5.] $\rho(k)\in N^v \cup N^d \cup N^g \cup \Tilde{Y} ,k=0,\dotsc,K$- The position of UAV at each transition $k$.
     \item[6.] $\gamma(k) ,k=0,\dotsc,K$- The time stamp at each transition $k$.
    \item[7.] $\eta(k),k=0,\dotsc,K$- The energy of UAV at each transition $k$.
\end{itemize}

At each transition from a position $\rho_i$ to $\rho_j$, the following constraint must be satisfied.

\begin{equation}
\begin{split}
    \rho(k)&=\rho_i \land \rho(k+1)=\rho_j\\ &\implies(\gamma({k+1})=\gamma(k)+\gamma_{ij}+\gamma_j),\\
        \rho(k)&=\rho_i \land \rho(k+1)=\rho_j\\ &\implies (\eta({k+1})=\eta(k)+\eta_{ij}+\eta_j),
    \end{split}
\end{equation}
where $\gamma_{ij}$ is the transition time from $\rho_i$ to $\rho_j$, $\gamma_j$ is the slack time at $\rho_j$ position, $\eta_{ij}$ is the transition energy from $\rho_i$ to $\rho_j$ and $\eta_j$ is the slack energy at $\rho_j$ position. We evaluate $\gamma_{ij}$ as follows.

\begin{equation}
    \gamma_{ij}=
\begin{cases}
\gamma_d, \ \ i=N^{g}_{q_i} \land  j=N^{g}_{q_j} ,\\
\phi_\gamma(i,j), \ \ else.\\
\end{cases}
\end{equation}
Here, $\phi_\gamma(i,j)$ is a function that evaluates the transition time for any given pair of nodes when traveling at UAV velocity. The slack time, $\gamma_j$, is set as follows.

\begin{equation}
    \gamma_j=
\begin{cases}
\gamma_d, \ \ j=N^{d}_q,\\
0, \ \ else.\\
\end{cases}
\end{equation}

Then, we evaluate the transition energy $\eta_{ij}$ as follows.
\begin{equation}
    \eta_{ij}=
\begin{cases}
B^a_{\mathrm{charge}}, \ \  i=N^{g}_{q_i} \land  j=N^{g}_{q_j},\\
\psi_\eta(i,j), \ \ else.\\
\end{cases}
\end{equation}

Here, $\psi_\eta(i,j)$ is a function that evaluates the transition energy consumed between any given pair of nodes when traveling at UAV velocity. The slack energy, $\eta_j$ is set as follows.

\begin{equation}
    \eta_j=
\begin{cases}
B^a_{\mathrm{charge}}, \ j=N^{d}_{r},\\
0,\ \ else.\\
\end{cases}
\end{equation}

The following constraint is added to ensure  that the  specification is satisfied.
\begin{equation}
    \bigwedge_{\forall N^v}\Big(\bigvee_{\forall \rho(k)}(\rho(k)=N^v)\Big)=True
\end{equation}

Further, any UGV charging location is set to be active at the corresponding time as follows.
\begin{equation}
    \rho(k)=N^{g}_q \implies \gamma(k)=\gamma^{g}_q.
\end{equation}

At each depot charging instance, we set the start time to be an integer multiple of $\gamma_d$.

\begin{equation}
    \bigwedge_{\forall N^{d}_r}\Big(\bigvee_{\forall i\in\{1,...,K\}}(\rho(k)=N^{d}_r \implies \gamma(k)= i\gamma_d)\Big)=True.
\end{equation}

The AO problem is formulated using the previously defined constraints for each vehicle. Any feasible solution $\tau$ that will satisfy the set of constraints satisfies the mission. An SMT solver, $Z3$, is used to generate a feasible solution. However, the solution improvement is imposed as follows in Algorithm \ref{alg:planning}.
\begin{equation}
     f_S(\zeta_{d_1})\geq f_S(\zeta_{d_2}).
\end{equation}

 AO solution improves the plan from TDO for each UAV agent while satisfying the task site assignments. Above formulation allows both TDO and AO to be used as solvers in the proposed iterative framework in Algorithm \ref{alg:planning}. With sampled systems defined, we set $\Omega_1=$ TDO and $\Omega_2=$ AO. This is because TDO can compute feasible solutions quickly, while AO can reduce the  sub-optimality while exploiting computation resources. Further, TDO and AO are implemented in the shrinking horizon execution scheme described in Algorithm \ref{alg:execution}, which improves the plan further in between execution steps. In the next section, we analyze the results of this implementation of the proposed solution framework for planning tasks in the running example.

\section{Implementation Results}

\subsection{Direct Implementation vs. Proposed Implementation}
This section presents the results of implementing Algorithm \ref{alg:planning} and Algorithm \ref{alg:execution} for a multi-vehicle planning scenario, using the two aforementioned solvers, TDO and AO. The task is to compute a plan for the road monitoring task described in the previous section. We use a team of two UAVs and one UGV. The discretization time $\gamma_d$ is chosen as 5 min. Therefore, the shrinking horizon execution step is 5-min long. We also set the timeouts of the solver such that the execution step is not violated. The planning algorithm was executed on an AMD Ryzen Threadripper processor at 3.9MHz with 64GB RAM. We compare the results of the following two  approaches. 
\begin{itemize}
    \item[1.] Open-loop execution of TDO (Direct Implementation).

    A plan is computed using TDO and implemented for the whole planning horizon. Therefore, we do not employ a secondary solver for reducing sub-optimality nor trigger any re-planning after the first computation.
    
\item[2.] Shrinking horizon execution of iterative planning (Proposed Implementation).

A plan is computed using Algorithm \ref{alg:planning} employing TDO and AO iteratively at each step of execution through a shrinking horizon using algorithm \ref{alg:execution}. Therefore, two solvers are used iteratively in planning, and at each step of execution, re-planning is triggered. 
    
\end{itemize}

\begin{figure}[h]
\centering
\includegraphics[width=0.5\textwidth]{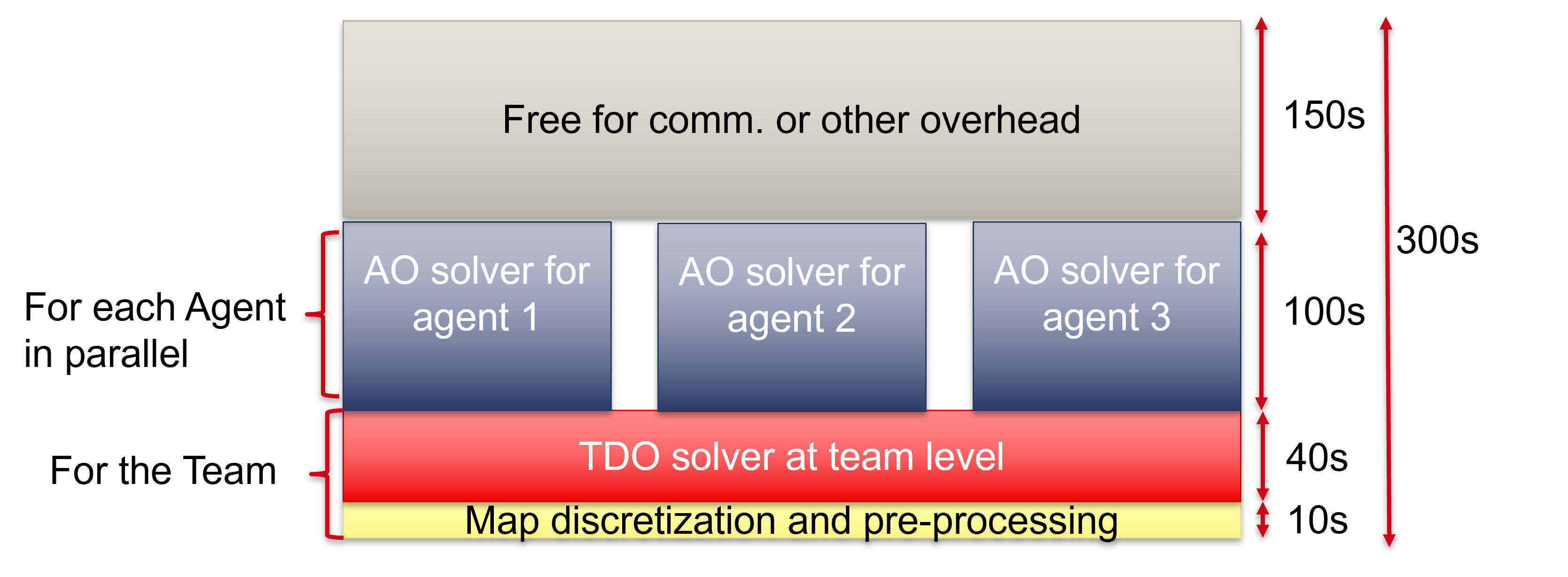}
\caption{Average compute time distribution: TDO represents Direct Implementation, while TDO+AO represents the proposed approach.}
\label{fig:compute_time}
\end{figure}

Figure \ref{fig:compute_time} shows the distribution of compute time for the proposed approach. We run TDO to compute a plan for the whole team for the complete horizon. If this plan is implemented, it corresponds to the Direct Implementation approach. However, using the proposed approach, we run Algorithm \ref{alg:planning} with AO as the next step. As it is run at each agent individually, this enables running AO in parallel. For an execution step of $5$ min, we finish planning through the proposed approach leaving a substantial amount of time for communications and other practical overheads. From the results of running the proposed approach, it requires an extra $100$ seconds of computing time on average than direct implementation, while still staying within the boundary of the 5-min execution step. However, by trading off execution time, the proposed approach gains an improvement of the total mission time.

\begin{table}[h!]
    \caption{Result comparison between Direct Implementation and Proposed Implementation }
\begin{center}
\begin{tabular}{ | m{2cm} | m{2cm}| m{2cm} | } 
  \hline
  Comparison metric & Direct Implementation & Proposed Implementation \\ 
  \hline
  Plan time & 120mins & 105mins \\ 
  \hline
  Compute time & 50s & 150s\\ 
  \hline
\end{tabular}
\end{center}
\label{table:result}
\end{table}

We present the results of the mission time improvement and compute time in Table \ref{table:result}. It is evident from the results that the proposed algorithm minimizes plan time and improves the cost on an average of $12\%$ by trading off compute time up to 100s. Further, Figure \ref{fig:improvem} shows the average improvement of planning time $\delta_f$ across each execution step for 25 instances of planning solutions generated by the proposed approach. If $\zeta_{TDO}$ and $\zeta_{AO}$ are implementations of plans generated from TDO and AO, respectively, we define, 
\begin{equation}
     \delta_f= \frac{f_S(\zeta_{TDO})-f_S(\zeta_{AO})}{f_S(\zeta_{TDO})}.
\end{equation}
 We observe that the iterative planning framework and the shrinking horizon execution method contribute to continual improvements in the total mission time at each time step of execution.

\begin{figure}[h]
\centering
\includegraphics[width=0.48\textwidth]{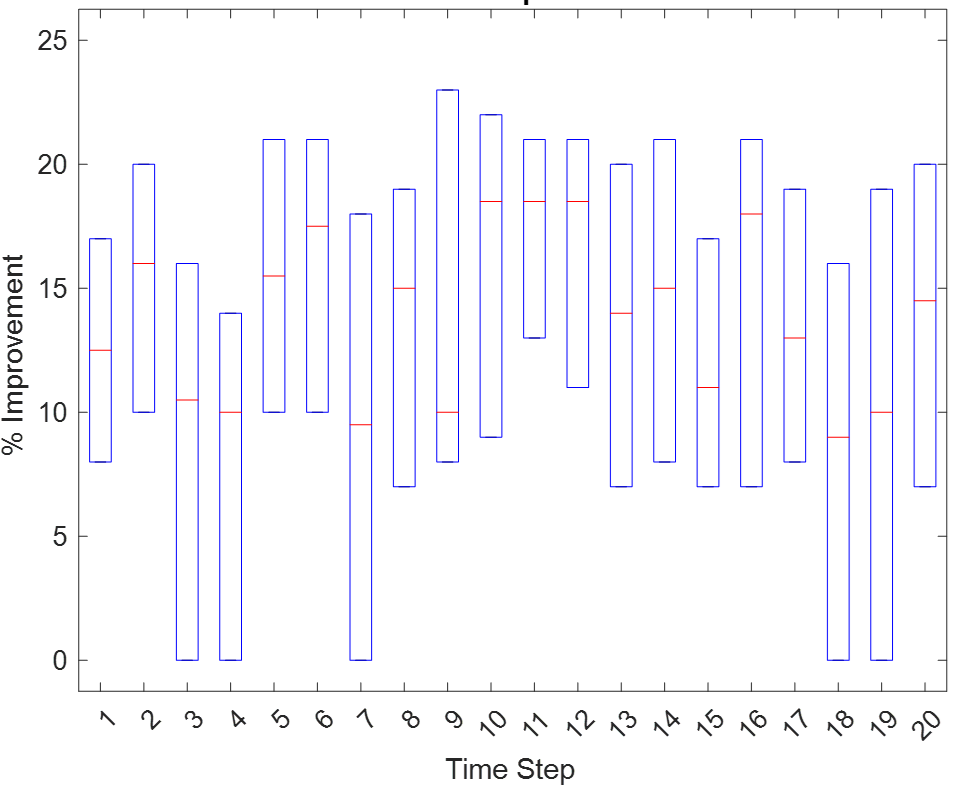}
\caption{Average planning time improvement $\delta_\gamma$: The plot shows the mean and the interquartile range of plan time improvement at each execution step, averaged for 25 planning instances.}
\label{fig:improvem}
\end{figure}

\color{black}
\subsection{Comparison: Proposed Implementation vs. Genetic Algorithm and Bayesian Optimization Solvers }

\color{black}
\subsubsection{Genetic Algorithm and Bayesian Optimization Solvers}

This section presents the results of the proposed implementation (PI) compared with a solver based on a genetic algorithm (GA) and a solver based on Bayesian optimization (BO) for the multi-vehicle planning scenario as presented in \cite{7_9836044}. We selected these specific baselines as \cite{7_9836044} uses the same energy-aware cooperative routing scenario described in Section \ref{sec:example} as the case study to evaluate the proposed solutions. Further, the approach taken in \cite{7_9836044} follows the state of the art, which converts the original continuous state space into a hybrid state space. It also uses a hierarchical approach, where an initial step of clustering of task sites simplifies the problem for the next optimization step. The utilized problem parameters are detailed in the table \ref{table:parame}. 

\begin{table}[h!]
    \caption{Problem parameters in compared case study }
\begin{center}
\begin{tabular}{ | m{1.5cm} | m{1.75cm}|  m{1.75cm}| m{1.75cm} | } 
  \hline
   Parameters & GA & BO & PI \\ 
  \hline
  UGV & 1 &  1 & 1 \\ 
  \hline
  UAVs & 2 & 2 & 2\\ 
  \hline
Scenario & Figure \ref{fig:exsample_scn} & Figure \ref{fig:exsample_scn} & Figure \ref{fig:exsample_scn}\\ 
  \hline
  Depots & 3 & 3 & 3\\ 
  \hline
    Discrete states & 41 & 41 & 38\\ 
  \hline
    Energy dynamics & Eqs. \eqref{eq:aircharge}, \eqref{eq:airpower}, \eqref{eq:gndpower} & Eqs. \eqref{eq:aircharge}, \eqref{eq:airpower}, \eqref{eq:gndpower}  & Eqs. \eqref{eq:aircharge}, \eqref{eq:airpower}, \eqref{eq:gndpower} \\ 
  \hline
\end{tabular}
\end{center}
\label{table:parame}
\end{table}
\subsubsection{Result Comparision}
The team composition was two UAVs and one UGV. The planning algorithms, PI and GA, were executed on an AMD Ryzen Threadripper processor at 3.9MHz with 64GB RAM while BO was executed on a server with similar computing power. We compare the results of the two approaches BO and GA against the proposed implementation. The detailed implementations of the GA-based solver and BO-based solver are discussed in \cite{7_9836044}. The tested scenario was to complete the road monitoring task for the road network in Figure \ref{fig:map}. The results of this implementation are presented in Table\ref{table:result_comp}. We observe that the proposed approach achieves faster compute time while achieving more than $50\%$ reduction of the total mission time. Due to the fast solution computation, it is evident that the proposed shrinking horizon execution scheme can be implemented in a real-time planning scenario.

\color{black}

\begin{table}[h!]
    \caption{Result comparison between Genetic Algorithm (GA), Bayesian Optimization (BO) and Proposed Implementation (PI) }
\begin{center}
\begin{tabular}{ | m{1.5cm} | m{1.75cm}|  m{1.75cm}| m{1.75cm} | } 
  \hline
  Metric & GA & BO & PI \\ 
  \hline
  Plan time & 225min &  249min & 105min \\ 
  \hline
  Compute time & 94min & 15min & 2.5min\\ 
  \hline
\end{tabular}
\end{center}
\label{table:result_comp}
\end{table}

\color{black}

\subsection{Comparison: Proposed Implementation vs. Optimization Solver }

In this section, we compare the case study results of the proposed approach against optimal solutions from a Mixed Integer Programming (MIP) solver. One of the key advantages of our approach compared to optimization solvers is that we have an anytime property while \color{black} optimization solvers, especially in real-life larger-scale NP-hard problems, often fail to deliver an optimal solution in a compute time that enables real-time implementation. \color{black} The following results further validate the strengths of the proposed approach.  
We employ an MIP solver (Google ORtools) that solves the optimization problem \eqref{eq:planprob_ex_sampled_case}, which minimizes the total plan times subject to the system and task constraints in  Eqs. \eqref{eq:cons10} - \eqref{eq:cons24}.

\begin{equation}\label{eq:planprob_ex_sampled_case}
 \minimize_{\tau \in M} f_S(\zeta_\tau),\\
 \text{  subject to : Eqs. \eqref{eq:cons10} - \eqref{eq:cons24} .}
 \end{equation}

The results of MIP solver, which is an optimal solver (OP) and Proposed Implementation (PI) is compared to evaluate the sensitivity to the number of agents and the number of task sites. We use the compute time and the sub-optimality as the evaluation metrics.

\subsubsection{ No. of Agents} We adjust the number of agents by setting the number of groups of agents. A group consists of a UGV and two UAVs. We adjust the number of groups from $2-12$, which changes the number of agents from $6-36$. The number of task sites is held lower at $5$. We set an upper bound of compute time as $T_d=300s$, as that is the execution time step. It is considered a failed attempt if MIP or PI fails to deliver a solution within $T_d$.

\begin{figure}[h]
\centering
\includegraphics[width=0.48\textwidth]{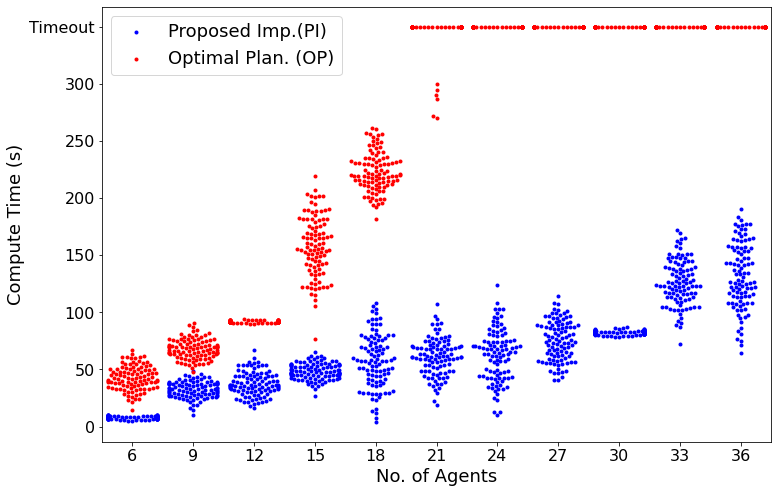}
\caption{\color{black}Swarm plot of compute time distribution OP and PI for varying the no. of agents. PI - Each run generates a feasible solution. OP - Each run generates the optimal solution. Timed out runs are shown at the top.}
\label{fig:comp_t}
\end{figure}

We present the results of this case study in Figure \ref{fig:comp_t}. We see that the proposed approach generates feasible solutions faster than the optimal solution, as expected. While the compute time increases with the no. of agents, it is still within $T_d$. Thus, a feasible task-satisfying solution can always be generated. This is because when the number of agents increases, the feasible solution space also increases in size. Thus, finding a feasible solution is easier and faster, while searching for the optimal solution is computationally intensive. 

\color{black} 
The plan time for completed runs is shown in Figure \ref{fig:plantime_agents}, which excludes the timed-out runs. 
The PI solutions were sub-optimal, taking a maximum of $22.2\%$ additional time than optimal solutions. The optimality gap decreases with increasing agents. This is because the agent-level optimization of iterative planning works best to improve the optimality gap when more variables are present at a higher number of agents.
\color{black}  

\begin{figure}[h]
\centering
\includegraphics[width=0.48\textwidth]{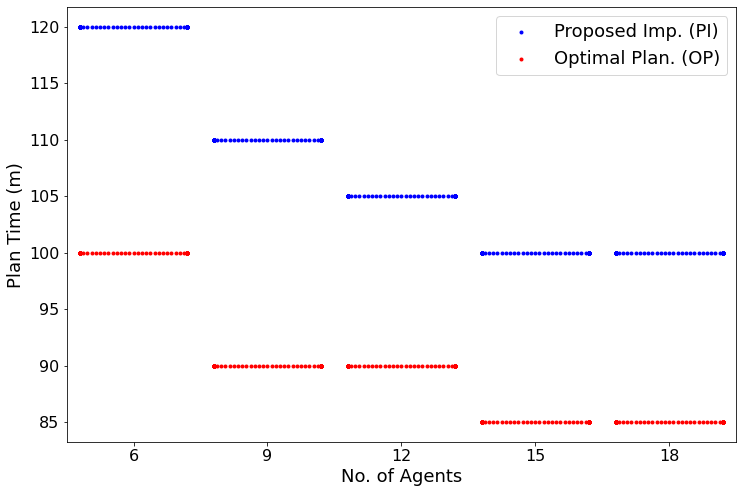}
\caption{Plan time distribution OP and PI for varying the no. of agents. PI - Each run generates a feasible solution. OP - Each run generates the optimal solution. \color{black} Maximum sub-optimality is 22.2\%.}
\label{fig:plantime_agents}
\end{figure}

\subsubsection{ No. of Task Sites} We adjust the number of task sites by adjusting the goal states for the monitoring task from $3-20$. The number of agents is held at $9$. We set an upper bound of compute time as $T_d=300s$, as that is the execution time step. It is considered a failed attempt if MIP or PI fails to deliver a solution within $T_d$.
\begin{figure}[h]
\centering
\includegraphics[width=0.48\textwidth]{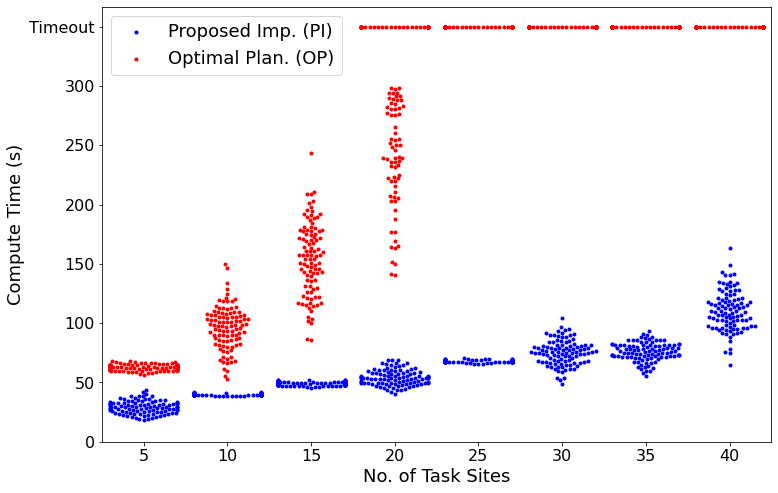}
\caption{\color{black} Swarm plot of compute time distribution OP and PI for varying the number of task sites. PI - Each run generates a feasible solution. OP - Each run generates the optimal solution. Timed out runs are shown at the top.}
\label{fig:comp_t_sites}
\end{figure}

We present the results of this case study in Figure \ref{fig:comp_t_sites}.
We see that the proposed approach generates feasible solutions faster than the optimal solution, as expected. While the compute time increases with the number of task sites, it is still within $T_d$. Thus, a feasible task-satisfying solution can always be generated. This is because when the number of task sites increases, the number of constraints that need to be satisfied increases. Therefore, searching for a satisfying solution becomes computationally harder, while searching for the optimal solution is further computationally expensive. The plan time for completed runs is shown in Figure \ref{fig:plantime_tasks}, \color{black}which excludes the timed-out runs. \color{black} 
On average, the solutions from PI were sub-optimal at $10.5\%$ additional time than optimal solutions.

\begin{figure}[h]
\centering
\includegraphics[width=0.48\textwidth]{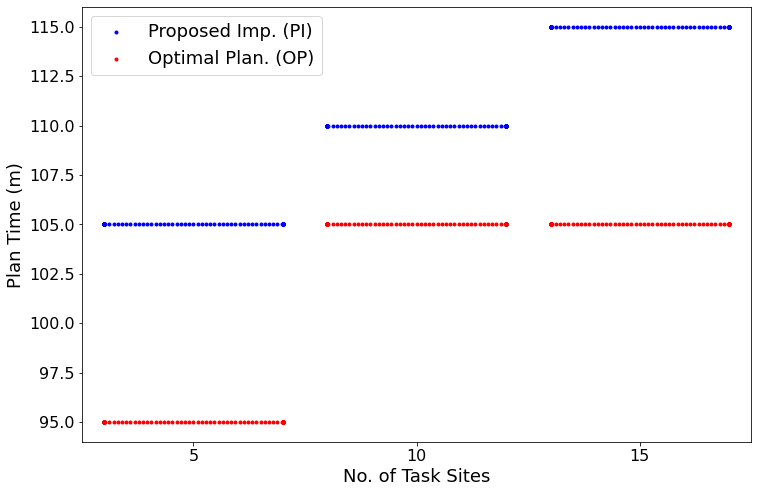}
\caption{\color{black}Plan time distribution OP and PI for varying the no. of task sites. PI - Each run generates a feasible solution. OP - Each run generates the optimal solution. \color{black}Maximum sub-optimality is 10.5\%.}
\label{fig:plantime_tasks}
\end{figure}

\color{black}

\color{black}

\section{Conclusion}
In this paper, we present an iterative planning framework for multi-agent systems. We formulate a mathematical framework to represent multi-dimensional systems and planning tasks using transition systems. Solving planning problems of this nature is computationally complex. Therefore, we propose the idea of using sampled transition systems to discretize the problem to employ faster solvers. We propose an iterative approach that defines a solution framework using multiple solvers to continually improve a plan using a given amount of computation time. The proposed approach ensures that a solution is always available for execution, while the iterative process continues on further optimization. We also propose a shrinking horizon execution of the plan to enable re-planning at each step of execution for continual improvement of the plan. Further, we also derive and prove the necessary conditions that guarantee the recursive feasibility of the proposed approach. The proposed theoretical framework has the ability to be applied to plan for any generalized task site assignment using multiple solvers iteratively. 

We validate the proposed approach using an energy-aware UAV-UGV cooperative route planning problem to satisfy a given task site assignment. The example presented in this paper validates the proposed framework by producing plans that satisfy the goal, using computing resources efficiently to preserve real-time implementation ability. The iterative planning framework consisting of two SMT solvers ensures the continual improvement of the plan when executed in a shrinking horizon. 

In conclusion, the theoretical framework introduced in this paper to represent complex dynamical systems presents several avenues for future improvements. The proposed framework can be further extended to define planning tasks that go beyond the generalized task site assignments. Further, exploring methods to integrate uncertainty present in plan execution and communication into the theoretical framework is important. We believe the proposed framework has the potential to serve as a useful tool for researchers and practitioners working on multi-agent system planning problems.




\bibliographystyle{IEEEtran}
\bibliography{bibdata}

\begin{IEEEbiography}[{\includegraphics[width=1in,height=1.25in,clip,keepaspectratio]{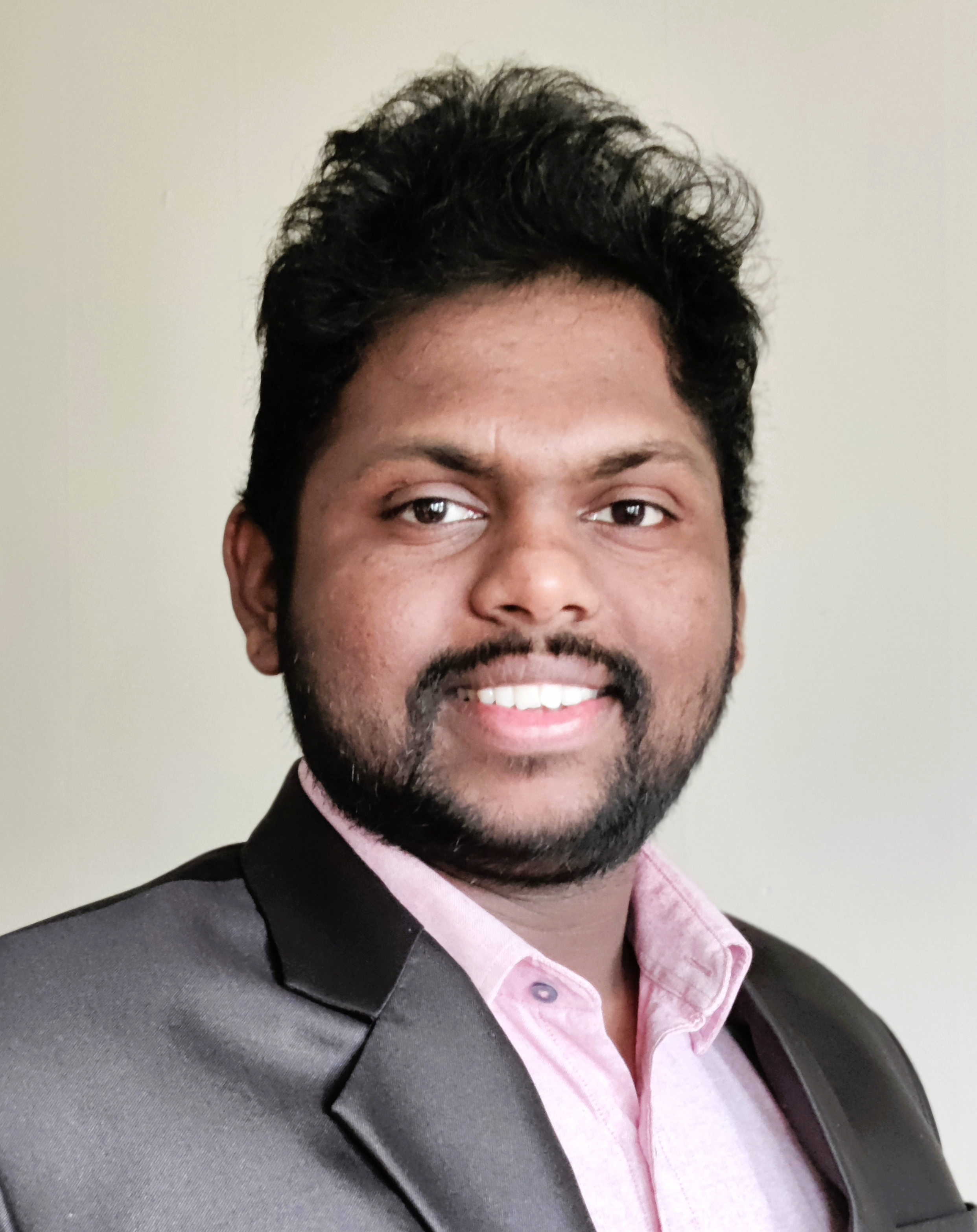}}]{Neelanga Thelasingha} is a Ph.D. candidate in the Electrical Computer and Systems Engineering Department at Rensselaer Polytechnic Institute, New York, USA. He received his M.S. in Computer and Systems Engineering from Rensselaer Polytechnic Institute, New York, USA, in 2021 and his B.Sc. in Electrical and Electronic Engineering with first-class honors from the University of Peradeniya, Sri Lanka, in 2017. His research interests include optimal control and motion planning in multi-agent autonomous systems.
\end{IEEEbiography}
\vspace{-33pt}
\begin{IEEEbiography}[{\includegraphics[width=1in,height=1.25in,clip,keepaspectratio]{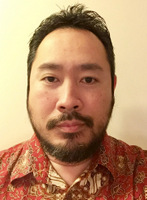}}]{A. Agung Julius (M’06, SM'21)} received the Ph.D. degree in applied mathematics from the University of Twente, Twente, The Netherlands, in 2005. From 2005 to 2008, he was a postdoctoral researcher with the University of Pennsylvania. Since 2008, he has been with the Department of Electrical, Computer, and Systems Engineering, Rensselaer Polytechnic Institute, Troy, New York, where he is currently a professor. His research interests include systems and control, systems biology, stochastic models in systems biology, control of biological systems, hybrid systems, and mathematical systems theory. He received the National Science Foundation CAREER Award in 2010. He is a senior member of the IEEE.
\end{IEEEbiography}
\vspace{-33pt}
\begin{IEEEbiography}[{\includegraphics[width=1in,height=1.25in,clip,keepaspectratio]{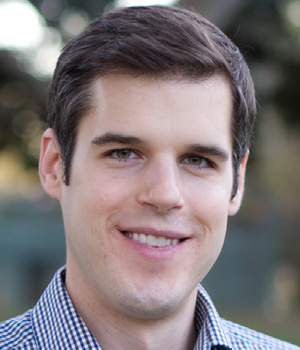}}]{James Humann} has been with the DEVCOM Army Research Laboratory since 2017 and is currently a Mechanical Engineer in the Vehicle Applied Research Branch. His research is focused on design, path planning, and simulation for multi-robot systems. He earned a Ph.D. and M.S. in Mechanical Engineering from the University of Southern California, and a B.S. in Mechanical Engineering from the University of Oklahoma.
\end{IEEEbiography}
\vspace{-33pt}
\begin{IEEEbiography}[{\includegraphics[width=1in,height=1.25in,clip,keepaspectratio]{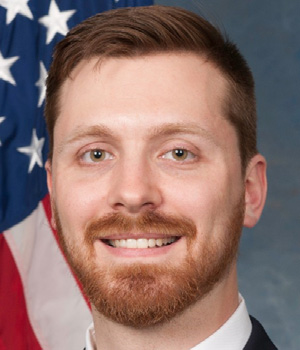}}]{Jean-Paul Reddinger} is an Aerospace Research Engineer at DEVCOM Army Research Laboratory. He currently works on Unmanned Aerial Systems (UAS) flight dynamics for modeling and simulation as part of the Vehicle Applied Research Branch. His background is in aeromechanics and controls for high-speed
compound rotorcraft.
\end{IEEEbiography}
\vspace{-33pt}
\begin{IEEEbiography}[{\includegraphics[width=1in,height=1.25in,clip,keepaspectratio]{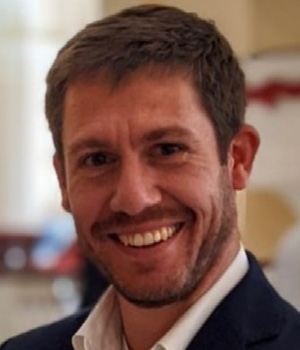}}]{James Dotterweich}
is a Robotics Research Engineer with the DEVCOM Army Research Laboratory in Aberdeen, Maryland, USA. He received his Master of Science in Mechanical Engineering with a robotics focus from The University of Utah, Salt Lake City, UT, USA in 2015. He received his Bachelor’s degree in Aerospace Engineering from The University of Colorado at Boulder, CO, USA in 2008. He has been working on robotic systems and software development for unmanned aerial and ground vehicles for ARL for the past nine years.
\end{IEEEbiography}
\vspace{-33pt}
\begin{IEEEbiography}[{\includegraphics[width=1in,height=1.25in,clip,keepaspectratio]{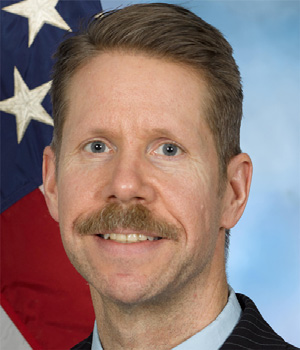}}]{Marshal Childers} is a Mechanical Engineer for the DEVCOM Army Research Laboratory. He leads efforts in the integration, prototyping, and experimentation of autonomous robotics technologies and has collaborated on numerous experiments that were conducted to evaluate technologies developed by ARL-funded research programs. Mr. Childers is an established engineer with over 20 years of experience in the areas of autonomous
technology assessment, integration, and mechanical design. He achieved a B.S. degree in mechanical engineering, graduating with honors, in 2000, and an M.S. degree in mechanical engineering in 2001, from the University of 
 Maryland, Baltimore County (UMBC).
\end{IEEEbiography}
\vspace{-33pt}


\end{document}